\newcommand{\N}{\mathbb{N}}
\newcommand{\taille}[1]{\lvert #1 \rvert}
\newcommand{\GRrondn}{G_n}
\newcommand{\fonction}[6]{\ifnum #1=0
\begin{array}[t]{rcl}
{#3} & \longrightarrow & {#4} \\
{#5} & \longmapsto & {#6} \end{array}
\else
\begin{array}[t]{lrcl}
{#2} : & {#3} & \longrightarrow & {#4} \\
    & {#5} & \longmapsto & {#6} \end{array}\fi}
\newcommand{\distinguer}[1]{{#1}_{K}}
\newcommand{\ssterme}[2]{{#1}{\downarrow #2}}
\newcommand{\sub}[2]{[#1\leftarrow#2]}
\newcommand{\arbre}[2]{\tree(#1,#2)}
\newcommand{\rond}[1]{\mathcal{#1}}
\newcommand{\GTrs}[1]{\mathbf{#1}}
\newcommand{\Rrond}{\mathbf{R}}
\newcommand{\foata}[1]{\lceil #1 \rceil_{\Foata}}
\newcommand{\residual}[2]{\Res(#1,#2)}
\DeclareMathOperator{\Fr}{Fr}
\DeclareMathOperator{\minimum}{min}
\DeclareMathOperator{\maximum}{max}
\DeclareMathOperator{\U}{Unf}
\DeclareMathOperator{\dec}{dec}
\DeclareMathOperator{\Accs}{Reach}
\DeclareMathOperator{\REC}{Rec}
\DeclareMathOperator{\Reg}{Reg}
\DeclareMathOperator{\Res}{Res}
\DeclareMathOperator{\FOAccs}{\FO[\Accs]}
\DeclareMathOperator{\pos}{Dom}
\DeclareMathOperator{\FO}{FO}
\DeclareMathOperator{\graph}{Gr}
\DeclareMathOperator{\tree}{Tree}
\DeclareMathOperator{\StructEv}{\mathcal{E}\mathcal{S}}
\DeclareMathOperator{\premiere}{prime}
\DeclareMathOperator{\goto}{goto}
\DeclareMathOperator{\Decr}{Decr}
\DeclareMathOperator{\Incr}{Incr}
\DeclareMathOperator{\Foata}{\mathcal{\mathbf{F}}}
\DeclareMathOperator{\neutral}{neutral}
\theoremstyle{plain}
\newtheorem{theo}{Theorem}[section]
\theoremstyle{plain}
\newtheorem{prop}[theo]{Proposition}
\theoremstyle{plain}
\newtheorem{lemme}[theo]{Lemma}
\theoremstyle{plain}
\newtheorem{corollaire}[theo]{Corollary}
\theoremstyle{plain}
\newtheorem{affirmation}{Claim}
\theoremstyle{definition}
\newtheorem{df}[theo]{Definition}
\theoremstyle{definition}
\newtheorem{exemple}[theo]{Example}
\theoremstyle{remark}
\newtheorem{remarque}[theo]{Remark}
\title{Unfolding of Finite Concurrent Automata}
\author{Alexandre Mansard
\institute{LIM - University of La R\'eunion\\ France}
\email{alexandre.mansard@univ-reunion.fr}}
\begin{document}
\maketitle

\begin{abstract}
We consider recognizable trace rewriting systems with level-regular contexts (RTL). A trace language is level-regular if the set of Foata normal forms of its elements is regular. We prove that the  rewriting graph of a RTL is word-automatic. Thus its first-order theory is decidable. Then, we prove that the concurrent unfolding of a finite concurrent automaton with the reachability relation is a RTL graph. It follows that the first-order theory with the reachability predicate ($\FOAccs$ theory) of such an unfolding is decidable. It is known that this property holds also for the ground term rewriting graphs. We provide examples of finite concurrent automata of which the concurrent unfoldings fail to be ground term rewriting graphs. The infinite grid tree (for each vertex of an infinite grid, there is an edge from  this vertex to the origin of a copy of the infinite grid) is such an unfolding. We prove that the infinite grid tree is not a ground term rewriting graph. We have thus obtained a new class of graphs for with a decidable $\FOAccs$ theory.

\end{abstract}

\section{Introduction}

A challenging problem in automatic verification consists in determining (or in extending) classes of infinite graphs having a decidable theory in a given logic. A first technique consists in considering some judicious graph transformations, as for example unfolding (that preserves decidability of monadic second-order logic) or logical interpretations. The pushdown hierarchy  \cite{DBLP:conf/mfcs/Caucal02} is a hierarchy of  decidable graphs of monadic second-order theory. Starting from finite graphs, each level consists of the monadic interpretations of the unfoldings of  lower levels. The tree-automatic hierarchy \cite{DBLP:journals/lmcs/ColcombetL07} is a hierarchy of graphs of decidable first-order (FO) theory: each level consists of finite set interpretations of the corresponding level of the pushdown hierarchy. A second technique is to consider graphs whose vertex set and relations are recognizable by automata whose recognized languages form a Boolean algebra. For instance, it is the case of word-automatic graphs or more generally tree-automatic graphs, \textit{i.e} graphs whose vertex set can be encoded by a regular tree language and each relation recognized by a synchronized tree transducer. It turns out the first level of the tree-automatic hierarchy consists of tree-automatic graphs. Lastly, rewriting systems also allow to define interesting graph classes. Graphs at the first level of the pushdown hierarchy  are the suffix rewriting graphs of recognizable word rewriting systems \cite{DBLP:conf/caap/Caucal90}. Ground term rewriting graphs (GTR graphs) with the reachability relation are tree-automatic and thus the first-order theory with the reachability predicate ($\FOAccs$) of a GTR graph is decidable \cite{DBLP:conf/lics/DauchetT90}.

Since its monadic second-order theory is not decidable, the infinite grid does not belong to the pushdown hierarchy and is therefore not the unfolding of a finite graph. Nevertheless, as a GTR graph, the infinite grid has a decidable  $\FOAccs$ theory. In fact, even the theory of the infinite grid in first-order logic extended by the operator of transitive closure for first-order definable relations remains decidable \cite{DBLP:conf/lics/WohrleT04}. But consider now  the infinite grid tree: from each vertex of an infinite grid, there is an edge (labelled by a new symbol) to the origin of a copy of the infinite grid. We will prove that this simple graph (it is just the configuration graph of a system with 2 counters that we can independently incremente and simultanely reset) has the $\FOAccs$ theory decidable but is not a GTR graph. In fact, we are interested in considering, more generally, a class of graphs that  model concurrent system computations. For such a system, sequential and parallel computations are possible. To that end, we will consider Mazurkiewicz traces: if the dependency is total, then a trace reduces to a string that describes sequential computation while independence between some letters bring the possibility to describe parallel computation.

For a recognizable trace rewriting system, that is a finite set of rules of the form $\rond{U}\cdot(\rond{V}\xrightarrow{\lambda} \rond{W})$ where $\rond{U}$, $\rond{V}$, $\rond{W}$ are recognizable trace languages, $\lambda$ a label, consider then its rewriting graph: the set of edges of the form $ts\xrightarrow{\lambda}ts'$ such that there exists a rewriting rule  $\rond{U}\cdot(\rond{V}\xrightarrow{\lambda} \rond{W})$ with $t\in\rond{U}$, $s\in\rond{V}$, $s'\in\rond{W}$. If all letters are dependent, then such a graph is at the first level of the pushdown hierarchy since it is the suffix rewriting graph of a recognizable word rewriting system, and if no distinct letters are dependent, then it is the configuration graph of a vector addition system. In any case, we will prove in Section 3 that such a graph is word-automatic, even with level-regular contexts: a trace language is level-regular if the set of Foata normal forms of its elements is regular. Since the set of Foata normal forms is regular, every recognizable trace language is level-regular. But, for example, if $a$ and $b$ are two independent letters, the trace language $[(ab)^*]$ is level-regular but not recognizable. The FO theory of the rewriting graph of a recognizable trace rewriting system with level-regular contexts (RTL graph) is thus decidable. We also prove that, in general, its $\FOAccs$ theory is not decidable. Otherwise, we could decide the halting problem for 2-counters Minsky machine. In Section 4, we prove that the concurrent unfolding of a finite concurrent automaton  has the $\FOAccs$ theory decidable, by showing that such a graph with the reachability relation is a RTL graph. This extends a theorem of Madhusudan \cite{DBLP:conf/lics/Madhusudan03} on decidability of FO theory of regular trace event structures \cite{Thiagarajan_1996}. We will observe that the infinite grid and the infinite grid tree are the concurrent unfoldings of finite concurrent automata. In Section 5, we define the tree of a graph and we prove that if it is a GTR graph, then it is finitely decomposable by size. The latter implies it is at the first level of the pushdown hierarchy. We deduce that the infinite grid tree is not a GTR graph.

\section{Preliminaries}Before presenting the rewriting graphs of recognizable trace rewriting systems, we recall some basic definitions about graphs, logics, automata and traces.

\medskip
Let $\Sigma$ be a finite alphabet and $\Sigma^*$ be the free monoid of words over $\Sigma$.

\subsection{Graphs}
A \emph{$\Sigma$-graph} $G$ is a subset of $V\times \Sigma \times V$ where $V$ is a set. An element $(p,a,q)\in G$ is an edge labelled by $a$ from source $p$ to target $q$. The notation $p\xrightarrow[G]{a}q$ (or $p\xrightarrow[]{a}q$ when $G$ is understood) means $(p,a,q)\in G$. The \emph{vertex} set of $G$ is $V_G=\{p\in V\mid \exists q\ (p\xrightarrow[G]{a}q \lor q\xrightarrow[G]{a}p)\}$. 

The graph $G$ is \emph{deterministic} if for every $a\in \Sigma$, if $(p\xrightarrow[G]{a}q$ and $p\xrightarrow[G]{a}q')$ then $q=q'$. 

A \emph{path} in $G$ between vertices $p$ an $q$, labelled by a word $u=a_1\dots a_k$ is a finite sequence of the form $p\xrightarrow[G]{a_1}p_1,\dots,p_{k-1}\xrightarrow[G]{a_k}q$. We denote by $p\xrightarrow[G]{a_1\dots a_k}q$ the existence of such a path. A \emph{loop} in $G$ is a path of the form $p\xrightarrow[G]{a}p$, where $a\in \Sigma$. If $G$ is finite, then for $p,q\in V_G$, the $\Sigma$-word language $L_{p,q}:=\{u\in\Sigma^*\mid p\xrightarrow[G]{u}q\}$ is regular. We write $p\xrightarrow[G]{*}q$ if there exists a word $u\in \Sigma^*$ such that $p\xrightarrow[G]{u}q$. Denote by $G_*$ the  $\Sigma\ \dot\cup\ \{*\}$-graph defined by $G_*=G\cup\{(p,*,q)\mid p\xrightarrow[G]{*}q\}$. The graph $G_*$ is obtained from $G$ by adding the reachability relation.
 
 An isomorphism $f$ from $(G,P)$ onto $(H,Q)$, where $G$ and $H$ are $\Sigma$-graphs and $P$ and $Q$ are subsets of $V_G$ and $V_H$ respectively, is a bijection from $V_G$ to $V_H$ such that $$f(P)=Q\ \text{ and }\ (p\xrightarrow[G]{a}q\Longleftrightarrow f(p)\xrightarrow[H]{a}f(q)).$$
\subsection{Logics}A $\Sigma$-graph $G$ is a relational structure over the binary signature $\Sigma$. The \emph{first-order} ($\FO$) theory of $G$ is defined as usual (see \cite{ebbinghaus1996mathematical}). The FO theory of $G_*$ will be refered to as the $\FOAccs$ theory of $G$.

\subsection{Automata}A $\Sigma$-automaton is a triple $\rond{A}=(G,i,F)$ where $G$ is a $\Sigma$-graph, $i\in V_G$ is an initial state and $F\subseteq V_G$ is the set of final states. The $\Sigma$-word language recognized by $\rond{A}$ is $L(\rond{A})=\{u\in \Sigma^*\mid \exists f\in F\ i\xrightarrow[G]{u}f\}$. A $\Sigma$-word language is \emph{regular} if it is recognized by a finite $\Sigma$-automaton. The class of regular $\Sigma$-word languages is a Boolean algebra and is denoted by $\Reg(\Sigma^*)$.

\subsection{Traces}
\subsubsection{Generalities}
A \emph{dependence relation} $D$ is a reflexive and symmetric binary relation on $\Sigma$. The pair $(\Sigma,D)$ is called a \emph{dependence alphabet}. The complement of $D$ is the \emph{independence relation} $I:=\Sigma^2\backslash D$. The $(\Sigma,D)$-trace equivalence $\equiv_{D}$ is the least congruence on $\Sigma^*$ such that $(a,b)\in I\Rightarrow ab\equiv_{D} ba$.
The $(\Sigma,D)$-trace of a word $w\in\Sigma^*$ is its $\equiv_{D}$-equivalence class and is denoted $[w]$. The quotient monoid $\Sigma^*/\equiv_{D}$ is called the \emph{trace monoid} of the dependence alphabet $(\Sigma,D)$ and is denoted by $M(\Sigma,D)$. Note that in case of $D=\Sigma^2$, the trace monoid $M(\Sigma,D)$ coincides to the free monoid $\Sigma^*$. 

The prefix binary relation $\sqsubseteq$ on $M(\Sigma,D)$ defined by $t\sqsubseteq t'$ if and only if there exists $s\in M(\Sigma,D)$ such that $ts=t'$ is a partial ordering.

Consider the finite alphabet  $I_D:=\{A\subseteq \Sigma \mid \forall a_1\neq a_2\in A\  (a_1,a_2)\in I\}$ and denote by $\Pi_{I_D}:I_D^*\rightarrow M(\Sigma,D)$ the canonical morphism  defined by $\Pi_{I_D}(\varnothing)=[\varepsilon]$ and $\Pi_{I_D}(\{a_1,\cdots,a_n\})=[a_1\dots a_n]$ ($n\geqslant 0$). Given $P\subseteq I_D$, we denote by $\Pi_{P}$ the restriction of $\Pi_{I_D}$ to $P^*$. A $P$-word $U$ encodes the trace $\Pi_P(U)$.

Consider the binary relation $\rhd$ on $I_D^{-}:=I_D\setminus \{\varnothing\}$ defined by: $A\rhd B\iff \forall b\in B\ \exists a\in A\ aDb$. Denote by $\Foata\subseteq {I_D^{-}}^*$ the set of $\rhd$-paths.

The surjective morphism $\Pi_{I_D^{-}}$ is not injective. Indeed, suppose $\Sigma=\{a,b\}$ and $aIb$, then $\Pi_{I_D^{-}}(\{a,b\})=\Pi_{I_D^{-}}(\{a\}\{b\})$. The following proposition expresses that each trace is encodable by a unique $I_D^{-}$-word in $\Foata$.
 \begin{prop}[Foata normal form]\label{foata}
   Let $t\in M(\Sigma,D)$. The Foata normal form of $t$, $\foata{t}$, is the unique $I_D^{-}$-word $\foata{t}=A_1\cdots A_p\in \Foata$ ($p\geqslant 0$) such that $\Pi_{I_D^{-}}(A_1\cdots A_p)=t$.
 \end{prop}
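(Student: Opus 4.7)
The plan is to prove existence and uniqueness together by induction on the length $\taille{t}$ of any representative of $t$, which is well-defined since $\equiv_D$ preserves length. The base case $\taille{t}=0$ is immediate: $t=[\varepsilon]$ and the only Foata-admissible decomposition is the empty one, as any nonempty $\rhd$-path contributes at least one letter to the image.

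The heart of the argument is a characterization lemma: for any $A_1\cdots A_p\in\Foata$ encoding $t$, one must have $A_1=\{a\in\Sigma\mid [a]\sqsubseteq t\}$. The inclusion from left to right is immediate, since $[a]\sqsubseteq \Pi_{I_D^{-}}(A_1)\sqsubseteq t$ whenever $a\in A_1$. For the converse, suppose $[a]\sqsubseteq t$ but $a\notin A_1$; the Parikh count forces $a$ to appear in some $A_j$, and I take $j\geq 2$ minimal. The condition $A_{j-1}\rhd A_j$ then supplies $b\in A_{j-1}$ with $bDa$. Choosing any linear order within each $A_i$ and concatenating yields a representative word of $t$ in which a specific $b$-occurrence precedes a specific $a$-occurrence with $bDa$; this precedence survives in the canonical partial order attached to $t$. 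By reflexivity of $D$, all $a$-events are linearly ordered, so the ``first'' $a$-event is well defined across representatives. Since every $a$-occurrence lies at level $\geq j \geq 2$ by minimality of $j$, this first $a$-event has a strict predecessor, contradicting $[a]\sqsubseteq t$.

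Using this characterization as a template, existence follows by setting $A_1=\{a\in\Sigma\mid [a]\sqsubseteq t\}$. I first check that $A_1\in I_D^{-}$: nonemptiness is clear since $t\neq [\varepsilon]$, and a Levi-style cancellation argument (if $au\equiv_D bv$ with $a\neq b$, then $aIb$) shows that two distinct letters both appearing as prefixes of $t$ must be independent. A short iteration gives $\Pi_{I_D^{-}}(A_1)\sqsubseteq t$, and cancellativity of $M(\Sigma,D)$ uniquely defines $t_1$ with $t=\Pi_{I_D^{-}}(A_1)\cdot t_1$. By induction $t_1$ admits a unique Foata form $A_2\cdots A_p$, and to confirm $A_1\rhd A_2$ I take $b\in A_2$, so $[b]\sqsubseteq t_1$: if $b$ were independent of every letter of $A_1$ it would commute past $\Pi_{I_D^{-}}(A_1)$, giving $[b]\sqsubseteq t$ and hence $b\in A_1$, whereupon $bDb$ provides the required dependent letter in $A_1$. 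Uniqueness then follows by combining the characterization (which pins down $A_1$ from $t$ alone) with cancellation (which determines $t_1$) and the induction hypothesis on $\taille{t_1}<\taille{t}$.

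The delicate step, on which the whole argument hinges, is the converse direction of the characterization lemma. It requires translating the combinatorial condition $\rhd$ into a genuine order-theoretic precedence in the partial-order view of traces, and handling carefully the case where the same letter appears in multiple levels: here reflexivity of $D$ is essential, ensuring that the first occurrence of $a$ in $t$ is unambiguous regardless of the chosen linearization.
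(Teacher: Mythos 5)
The paper does not actually prove this proposition: it is recalled as the classical Cartier--Foata normal form theorem, so there is no in-paper argument to compare against. Your proof is correct and is essentially the standard textbook argument: induction on $\taille{t}$, with the key step being the characterization $A_1=\{a\in\Sigma\mid [a]\sqsubseteq t\}$ of the first level of any $\rhd$-path encoding $t$, which simultaneously yields existence (take this $A_1$, cancel, recurse, and check $A_1\rhd A_2$ as you do) and uniqueness (the first level is forced, cancellativity forces the quotient, induction forces the rest). The only point worth flagging is that the converse inclusion of your characterization lemma silently uses the standard equivalence between $[a]\sqsubseteq t$ and the existence of a minimal $a$-occurrence in the dependence order of $t$; this is a genuine fact about traces that the paper never states, so in a self-contained write-up it should be isolated as a lemma (it follows from the invariance of the dependence order under $\equiv_D$ together with the observation that a minimal occurrence can be commuted to the front). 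With that made explicit, the argument is complete; your treatment of the degenerate case where a letter occurs in consecutive levels, via reflexivity of $D$, is exactly right.
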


 \begin{exemple}\label{exempleFoataNomalForm}Suppose  $\Sigma=\{a,b,c,d\}$ and $aIc$, $bId$, $cId$. The Foata normal form of $t=[acbdab]$ (see Figure \ref{fig:foata}) is $\foata{t}=\{a,c\}\{b,d\}\{a\}\{b\}$.
 \end{exemple}
\begin{figure}
  \centering
  \scalebox{0.5}{\includegraphics{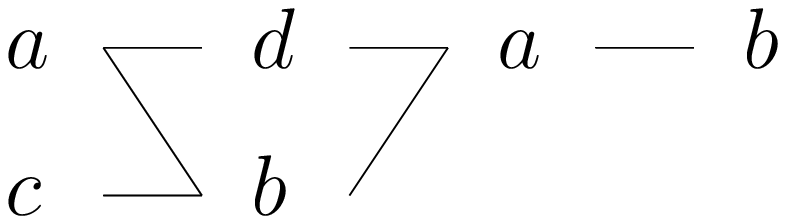}} 
    
    \caption{The Foata normal form of $[acbdab]$ (Example \ref{exempleFoataNomalForm})}
   \label{fig:foata}
  \end{figure}

  \begin{lemme}[Level automata]\label{levelautomata}The set $\Foata$ of Foata normal forms is regular.
\end{lemme}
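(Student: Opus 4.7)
The key observation is that membership in $\Foata$ is determined purely by a local condition on pairs of consecutive letters of the $I_D^-$-word: by definition, $A_1\cdots A_p \in \Foata$ iff $A_i \rhd A_{i+1}$ for every $1 \leqslant i < p$. Since $I_D^-$ is a finite alphabet (it is a subset of the finite set $2^\Sigma$) and $\rhd$ is a decidable binary relation on $I_D^-$, my plan is to build a finite automaton whose only memory is the last letter read, and to argue correctness by a direct induction on word length.

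Concretely, I would define the $I_D^-$-automaton $\mathcal{A} = (G, q_0, F)$ with state set $V_G := I_D^- \cup \{q_0\}$ (where $q_0$ is a fresh symbol), with transitions
\[
  q_0 \xrightarrow[G]{A} A \quad \text{for every } A \in I_D^-,
  \qquad
  A \xrightarrow[G]{B} B \quad \text{whenever } A \rhd B,
\]
and with $F := V_G$, so that every state (including $q_0$) is accepting. The automaton is finite since $I_D^-$ is, and it is visibly deterministic. A straightforward induction on $p$ shows that $q_0 \xrightarrow[G]{A_1\cdots A_p} A_p$ iff $A_1 \rhd A_2 \rhd \cdots \rhd A_p$, so $L(\mathcal{A})$ contains the empty word (via the accepting state $q_0$) and the nonempty words that are exactly the $\rhd$-paths. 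Hence $L(\mathcal{A}) = \Foata$.

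There is no real obstacle here: the construction is essentially the standard one for ``local'' languages defined by a finite set of forbidden length-$2$ factors (the forbidden factors being those pairs $AB \in (I_D^-)^2$ with $A \not\rhd B$). The only point to be slightly careful about is the boundary cases, namely the empty word and words of length $1$, both of which are in $\Foata$ by definition and are accepted by $\mathcal{A}$ thanks to the choice $F = V_G$ together with the $q_0$-transitions.
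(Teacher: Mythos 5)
Your construction is exactly the automaton given in the paper (the paper calls your fresh initial state $q_0$ the symbol $\bot$, with transitions $\bot\xrightarrow{A}A$ for $A\in I_D^{-}$ and $A\xrightarrow{B}B$ for $A\rhd B$, and all states final), so the proposal is correct and takes essentially the same approach.
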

\begin{proof}

  It is recognized by the following finite $I_D^{-}$-automaton $\rond{A}_{\Foata}$.
   \begin{itemize}
   \item The $I_D^{-}$-graph is given by:
     
     $\bot\xrightarrow{A} A$ : $ A\in I_D^{-}$

     $A\xrightarrow{B} B$ : $A\rhd B$
     
   \item the initial state is $\bot\notin I_D^{-}$
   \item all the states are final (even $\bot$).
   \end{itemize}
 \end{proof}

 \subsubsection{Recognizable trace languages}A $(\Sigma,D)$-\emph{trace language} is a subset of $M(\Sigma,D)$. If $\rond{L}$ is a trace language, then the word language $\cup\rond{L}$ is $\cup\rond{L}=\{w\in\Sigma^*\mid [w]\in \rond{L}\}$. If $L$ is a word language, then $[L]$ is the trace language defined by $[L]:=\{[w]\in M(\Sigma,D)\mid w\in L\}$.

 A trace language $\rond{L}\subseteq M(\Sigma,D)$ is \emph{recognizable} if there exists a finite monoid $N$ and a monoid morphism $\phi : M(\Sigma,D)\rightarrow N$ such that $\rond{L}=\phi^{-1}(\phi(\rond{L}))$. The class of recognizable trace languages is denoted by $\REC(M(\Sigma,D))$.
 \begin{remarque}
In case of $D=\Sigma^2$, $\REC(M(\Sigma,D))=\Reg(\Sigma^*)$.
\end{remarque}

The next proposition recalls the robustness of the class $\REC(M(\Sigma,D))$.

\begin{prop}\label{Boolealgebre}
$\REC(M(\Sigma,D))$ is a Boolean algebra closed under concatenation.
\end{prop}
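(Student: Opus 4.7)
The plan is to handle the Boolean operations by a product-morphism argument and to handle concatenation via the deeper structure theory of trace monoids.

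For the Boolean part, given $\rond{L}_1, \rond{L}_2 \in \REC(M(\Sigma,D))$ witnessed by morphisms $\phi_i : M(\Sigma,D) \to N_i$ into finite monoids, I would form the product morphism $\psi = (\phi_1,\phi_2) : M(\Sigma,D) \to N_1 \times N_2$. Each $\rond{L}_i$ is already $\phi_i$-saturated, hence $\psi$-saturated, so every Boolean combination is a union of $\psi$-fibres:
\[
\rond{L}_1 \cup \rond{L}_2 = \psi^{-1}(\psi(\rond{L}_1) \cup \psi(\rond{L}_2)), \qquad \rond{L}_1 \cap \rond{L}_2 = \psi^{-1}(\psi(\rond{L}_1) \cap \psi(\rond{L}_2)),
\]
and $M(\Sigma,D) \setminus \rond{L}_1 = \phi_1^{-1}(N_1 \setminus \phi_1(\rond{L}_1))$. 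This yields closure under finite union, intersection and complement.

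Concatenation is more delicate. The naive guess $\rond{L}_1 \cdot \rond{L}_2 = \psi^{-1}(\psi(\rond{L}_1) \cdot \psi(\rond{L}_2))$ fails already in the free monoid $\{a\}^*$: with $\phi : a^n \mapsto n \bmod 3$, the empty word lies in $\phi^{-1}(\phi(\phi^{-1}(1)) \cdot \phi(\phi^{-1}(2)))$ but obviously not in $\phi^{-1}(1) \cdot \phi^{-1}(2)$. A separate recognizer for $\rond{L}_1 \cdot \rond{L}_2$ must therefore be constructed. One standard route is to invoke Zielonka's theorem, pass from the recognizing morphisms $\phi_i$ to finite asynchronous automata $\rond{A}_1,\rond{A}_2$ for $\rond{L}_1,\rond{L}_2$, splice them by connecting the accepting states of $\rond{A}_1$ to the initial state of $\rond{A}_2$ via silent transitions, and determinise the resulting automaton. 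The equivalence between asynchronous-automata acceptance and recognizability then gives $\rond{L}_1 \cdot \rond{L}_2 \in \REC(M(\Sigma,D))$.

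The main obstacle is plainly the concatenation step: for arbitrary monoids $\REC$ need not be closed under product, and the fact that it is for trace monoids is a non-trivial structural result resting on Zielonka/Ochma\'nski-type arguments. The Boolean algebra part, by contrast, works uniformly in any monoid and is the easy product-morphism observation above. A complete proof of both closure properties is classical and can be cited from the standard references on trace theory.
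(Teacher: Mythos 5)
The paper gives no proof of this proposition at all: it is explicitly ``recalled'' as a classical robustness result of the class $\REC(M(\Sigma,D))$, so there is no in-paper argument to compare yours against. Your Boolean-algebra half is correct and complete: the product-morphism argument works verbatim with the paper's definition of recognizability ($\rond{L}=\phi^{-1}(\phi(\rond{L}))$), and your counterexample in $\{a\}^*$ correctly shows why the naive formula $\psi^{-1}(\psi(\rond{L}_1)\cdot\psi(\rond{L}_2))$ cannot be used for concatenation. You are also right that closure under concatenation is the genuinely non-trivial half (and that, unlike the Boolean part, it is special to trace monoids --- indeed $\REC$ is not even closed under Kleene star here, only under iteration of connected languages).

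The one point I would push back on is the specific route you sketch for concatenation. Splicing two asynchronous automata by ``silent transitions from accepting states of $\rond{A}_1$ to the initial state of $\rond{A}_2$'' is not a well-defined operation on asynchronous automata: the hand-over from the first machine to the second would have to happen simultaneously in all local components, and that global synchronization is exactly what asynchronous automata cannot do; there is no standard $\varepsilon$-transition or determinization theory that repairs this. Invoking Zielonka's theorem is also much heavier machinery than needed. The standard, elementary proof goes through the paper's own Proposition \ref{Recreg}: it suffices to show that $\pi^{-1}(\rond{L}_1\cdot\rond{L}_2)\subseteq\Sigma^*$ is regular, where $\pi:\Sigma^*\to M(\Sigma,D)$ is the canonical projection. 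A word $w$ lies in this set iff it admits a two-coloring of its positions into a subword $u$ with $[u]\in\rond{L}_1$ and a subword $v$ with $[v]\in\rond{L}_2$ such that every $v$-position occurring before a $u$-position carries a letter independent of that $u$-letter. A finite automaton can guess the coloring while running automata for $\pi^{-1}(\rond{L}_1)$ and $\pi^{-1}(\rond{L}_2)$ in parallel and recording the set of letters of $u$ still to come, which is enough to check the independence constraint. This is the Cori--Perrin argument found in the standard references, and it is what your citation should point to; with that substitution your proof is fine.
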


We give two characterizations of the recognizability of a trace language. 
The residual by $s\in M(\Sigma,D)$ of $\rond{L}\subseteq M(\Sigma,D)$ is $s^{-1}\rond{L}=\{t\in M(\Sigma,D)\mid s\cdot t\in \rond{L}\}$. For example, suppose  $\Sigma=\{a,b\}$ and $aIb$, then consider $\rond{L}=\{[ab],[abaa],[aaa],[aabbb]\}$. The residual by $[ab]$ of $\rond{L}$ is $\{\varepsilon,[aa],[abb]\}$. The recognizability of a trace language $\rond{L}$ is characterized by the finiteness of its set of residuals.
\begin{prop}\label{finituderesidus}
 $\rond{L}\in \REC(M(\Sigma,D))$ if and only if $\{s^{-1}\rond{L}\mid s\in M(\Sigma,D)\}$ is finite.
\end{prop}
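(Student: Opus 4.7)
The plan is to prove both directions by a standard Myhill--Nerode style argument adapted to trace monoids, using the monoid structure of $M(\Sigma,D)$.

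For the forward direction, I would assume $\rond{L}=\phi^{-1}(\phi(\rond{L}))$ for some monoid morphism $\phi:M(\Sigma,D)\to N$ with $N$ finite, and show that $\phi(s)=\phi(s')$ implies $s^{-1}\rond{L}=(s')^{-1}\rond{L}$. Indeed, for any $t\in M(\Sigma,D)$, $st\in\rond{L}\iff\phi(s)\phi(t)\in\phi(\rond{L})\iff\phi(s')\phi(t)\in\phi(\rond{L})\iff s't\in\rond{L}$, which gives the desired equality. Hence the number of distinct residuals is bounded by $|N|$, which is finite.

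For the converse, I would assume $Q:=\{s^{-1}\rond{L}\mid s\in M(\Sigma,D)\}$ is finite and construct a finite monoid recognizing $\rond{L}$ via a transformation-monoid construction. For each $t\in M(\Sigma,D)$, define $\delta_t:Q\to Q$ by $\delta_t(s^{-1}\rond{L}):=(st)^{-1}\rond{L}$. This is well-defined: if $s^{-1}\rond{L}=(s')^{-1}\rond{L}$ then for every $u$, $s(tu)\in\rond{L}\iff s'(tu)\in\rond{L}$, so $(st)^{-1}\rond{L}=(s't)^{-1}\rond{L}$. The set $M'$ of maps $Q\to Q$, equipped with the opposite of function composition (or equivalently, composition in the order $f\cdot g:x\mapsto g(f(x))$), is a finite monoid of size at most $|Q|^{|Q|}$, and the map $\phi:t\mapsto\delta_t$ is a monoid morphism, because $\delta_{tt'}(s^{-1}\rond{L})=(stt')^{-1}\rond{L}=\delta_{t'}(\delta_t(s^{-1}\rond{L}))$. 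Finally, $\rond{L}=\phi^{-1}(\phi(\rond{L}))$ since $t\in\rond{L}\iff\varepsilon\in t^{-1}\rond{L}=\delta_t(\rond{L})$, so membership in $\rond{L}$ is determined by $\delta_t$.

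The main subtlety is in the converse direction: the natural equivalence $s\equiv s'\iff s^{-1}\rond{L}=(s')^{-1}\rond{L}$ is only right-compatible, not automatically a two-sided congruence, so one cannot directly quotient $M(\Sigma,D)$ by it to obtain a recognizing morphism. The transformation-monoid detour above is precisely what overcomes this obstacle, and its correctness relies only on the well-definedness verified above, which in turn uses the associativity of concatenation in $M(\Sigma,D)$.
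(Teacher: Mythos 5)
Your proof is correct: both directions are sound, the well-definedness of the translation maps $\delta_t$ and the choice of composition order are handled properly, and your remark that the residual equivalence is only a right congruence (hence the need for the transformation-monoid detour) is exactly the right subtlety to flag. The paper states this proposition as a recalled standard characterization and gives no proof of its own, so there is nothing to compare against; your argument is the standard Myhill--Nerode-style one that the authors implicitly rely on.
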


Suppose $P$ is a finite alphabet and $\pi:P^*\rightarrow M(\Sigma,D)$ is a surjective morphism. For instance, $P$ could be $\Sigma$, $I_D^{-}$ or $I_D$. If for a trace $t$ we think of $\pi^{-1}(t)$ as the set of its $P$-encodings, the following proposition says that the recognizability of a trace language is equivalent to the regularity of the set of all $P$-encodings of its elements. 

\begin{prop}\label{Recreg}
 $\rond{L}\in \REC(M(\Sigma,D))$ if and only if $\pi^{-1}(\rond{L})$ is regular.

\end{prop}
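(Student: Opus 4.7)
The plan is a double inclusion, using for one direction the definition of recognizability via a finite monoid, and for the other direction the residual characterization given in Proposition \ref{finituderesidus}.

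For the direct implication, I assume $\rond{L}\in\REC(M(\Sigma,D))$ and pick a finite monoid $N$ together with a morphism $\phi:M(\Sigma,D)\to N$ with $\rond{L}=\phi^{-1}(\phi(\rond{L}))$. The composition $\phi\circ\pi:P^*\to N$ is then a morphism from $P^*$ to a finite monoid, and a direct rewrite gives $\pi^{-1}(\rond{L})=(\phi\circ\pi)^{-1}(\phi(\rond{L}))$. Hence $\pi^{-1}(\rond{L})$ is recognized by a finite monoid, which is one of the standard characterizations of regularity for word languages, so $\pi^{-1}(\rond{L})\in\Reg(P^*)$.

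For the converse, I assume $\pi^{-1}(\rond{L})$ is regular and want to produce only finitely many residuals $s^{-1}\rond{L}$ for $s\in M(\Sigma,D)$, so as to conclude by Proposition \ref{finituderesidus}. The key lemma is that for any $s\in M(\Sigma,D)$ and any $P$-word $u\in P^*$ with $\pi(u)=s$ (such a $u$ exists because $\pi$ is surjective), one has
\[ \pi^{-1}(s^{-1}\rond{L}) \;=\; u^{-1}\pi^{-1}(\rond{L}). \]
This identity is immediate: $v\in u^{-1}\pi^{-1}(\rond{L})$ means $\pi(uv)=s\cdot\pi(v)\in\rond{L}$, which means $\pi(v)\in s^{-1}\rond{L}$, i.e.\ $v\in\pi^{-1}(s^{-1}\rond{L})$. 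Since $\pi^{-1}(\rond{L})$ is regular, by the Myhill--Nerode theorem for $P^*$ the set of left residuals $\{u^{-1}\pi^{-1}(\rond{L})\mid u\in P^*\}$ is finite, so $\{\pi^{-1}(s^{-1}\rond{L})\mid s\in M(\Sigma,D)\}$ is finite as well.

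To pass from finiteness of $\{\pi^{-1}(s^{-1}\rond{L})\}$ to finiteness of $\{s^{-1}\rond{L}\}$ itself, I use that $\pi$ being surjective forces the preimage operator to be injective on subsets of $M(\Sigma,D)$: if $\pi^{-1}(\rond{L}_1)=\pi^{-1}(\rond{L}_2)$, then applying $\pi$ gives $\rond{L}_1=\pi(\pi^{-1}(\rond{L}_1))=\pi(\pi^{-1}(\rond{L}_2))=\rond{L}_2$. Thus the set of residuals of $\rond{L}$ is finite, and Proposition \ref{finituderesidus} yields $\rond{L}\in\REC(M(\Sigma,D))$. The only subtle point is this last surjectivity argument; otherwise the proof is essentially bookkeeping on the commuting square $\phi\circ\pi$.
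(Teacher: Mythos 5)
Your proof is correct: both directions check out (the saturation identity $\pi^{-1}(\rond{L})=(\phi\circ\pi)^{-1}(\phi(\rond{L}))$, the residual identity $\pi^{-1}(s^{-1}\rond{L})=u^{-1}\pi^{-1}(\rond{L})$ for any $u$ with $\pi(u)=s$, and the use of surjectivity to get injectivity of $X\mapsto\pi^{-1}(X)$ are all sound). The paper states Proposition~\ref{Recreg} without proof, recalling it as a standard fact of trace theory, and your argument is exactly the standard one, combining the monoid characterization of recognizability with the residual characterization of Proposition~\ref{finituderesidus}.
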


\section{Recognizable trace rewriting system with level-regular contexts}The trace language $[(ab)^*]$ with $aIb$ is not recognizable since it has an infinite set of residuals. Nevertheless, the set of Foata normal forms of its elements $\{a,b\}^*$ is regular. This suggests to consider a weaker form of recognizability. In this section, we define the notion of level-regularity for trace languages. Then we consider recognizable trace rewriting systems with level-regular contexts and we prove that their rewriting graphs are word-automatic.

Let $(\Sigma,D)$ be a dependence alphabet. In the following, we write $\Pi_{\Foata}$ for the restriction of $\Pi_{I_D^{-}}$ to $\Foata$.
\subsection{Level-regularity}

\begin{df}$\rond{L}\subseteq M(\Sigma,D)$ is level-regular if the word language $\Pi_{\Foata}^{-1}(\rond{L})$ is regular.
\end{df}

 By  Proposition \ref{Recreg} and Lemma \ref{levelautomata}, every recognizable trace language is level-regular. Indeed, $\Pi_{\Foata}^{-1}(\rond{L})=\Pi_{I_D}^{-1}(\rond{L})\cap \Foata$.

The class of level-regular languages is a Boolean algebra but it is not closed under concatenation. Consider for example the concatenation of the two level-regular trace languages $[(ab)^*]$ and $[(bc)^*]$, with $D=\{(a,a),(b,b),(c,c)\}$. The set of Foata normal forms of its elements

\begin{center}
\begin{tabular}{rl}
   &$\Pi_{\Foata}^{-1}([(ab)^*]\cdot [(bc)^*])$\\
   $=$&$\{\{a,b,c\}^k\{b,c\}^*\{b\}^k\mid k\geqslant 0\}\cup \{\{a,b,c\}^k\{a,b\}^*\{b\}^k\mid k\geqslant 0\}$\\
\end{tabular}
\end{center}

is not regular.

\subsection{Trace rewriting system}

Graphs at the first level of the pushdown hierarchy are the suffix rewriting graphs of recognizable word rewriting systems. Such a rewriting system is a finite set of rules of the form $U\cdot (V\xrightarrow{}W)$, where $U$ (the context language), $V$ and $W$ are regular languages. In the following, we consider recognizable trace rewriting systems with  level-regular contexts and recognizable left and right hand sides and we prove that their rewriting graphs are word-automatic by encoding their vertex sets by their Foata normal forms.

\begin{df}A recognizable trace rewriting system with level-regular contexts (RTL) $\Rrond$ on $M(\Sigma,D)$ is a finite set of rules of the form  $$\rond{U}\cdot(\rond{V}\xrightarrow{\lambda} \rond{W})$$
where $\rond{U}$ is level-regular, $\rond{V},\rond{W}\in \REC(M(\Sigma,D))$ and $\lambda\in \Lambda$ a set of labels.

The rewriting graph $\graph_\Rrond$ of the RTL $\Rrond$ is the $\Lambda$-graph on $M(\Sigma,D)$ defined by  $$\graph_\Rrond=\{[uv]\xrightarrow{\lambda}[uw]\mid \exists\ \ \rond{U}\cdot(\rond{V}\xrightarrow{\lambda} \rond{W})\in \Rrond, [u]\in \rond{U},[v]\in \rond{V},[w]\in \rond{W} \}.$$
\end{df}

\begin{exemple}\label{alexbis}
  Suppose $D=\{(a,a),(b,b)\}$ and consider the following RTL:

  $[(a+b)^*]\cdot([\varepsilon]\xrightarrow{a}[a])$

  $[(a+b)^*]\cdot([\varepsilon]\xrightarrow{b}[b])$

  $[(ab)^*]\cdot([\varepsilon]\xrightarrow{f}[\varepsilon])$

  Its rewriting graph is the infinite grid with a loop labelled by $f$ on each vertex of its diagonal (see Figure \ref{fig:grillediag}).
  
\end{exemple}

\begin{exemple}\label{alex}
  Suppose $D=\{(a,a),(b,b),(c,c)\}$ and consider the following RTL $\Rrond$:

  $[(abc)^*]\cdot ([\varepsilon]\xrightarrow{a}[abc])$

  $[(abc)^*(ac)^*]\cdot([b]\xrightarrow{b}[\varepsilon])$

  $[(abc)^*(ac)^*]\cdot([ac]\xrightarrow{c} [\varepsilon])$

The rewriting graph $\graph_\Rrond$  (see Figure \ref{fig:cone})  is not in the pushdown hierarchy because its MSO theory is undecidable. Furthermore, remark that without the $c$-inner edges, we obtain a graph belonging to level 2 of the pushdown hierarchy.
\end{exemple}

\begin{figure}[h]
    \begin{minipage}[t]{.4\linewidth}
        \centering
        \scalebox{0.65}{\includegraphics{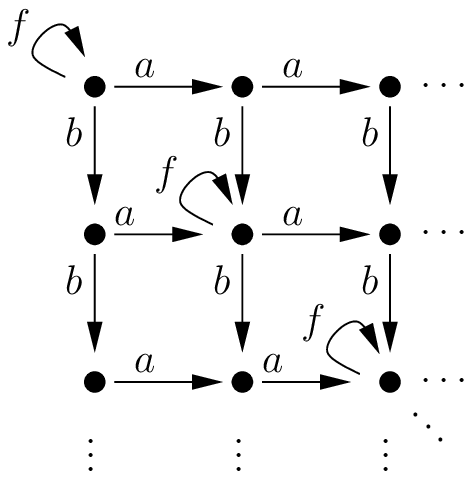}}
        \caption{The diagonal of the infinite grid (Example \ref{alexbis})}
        \label{fig:grillediag}
    \end{minipage}
    \hfill%
    \begin{minipage}[t]{.4\linewidth}
        \centering
        \scalebox{0.32}{\includegraphics{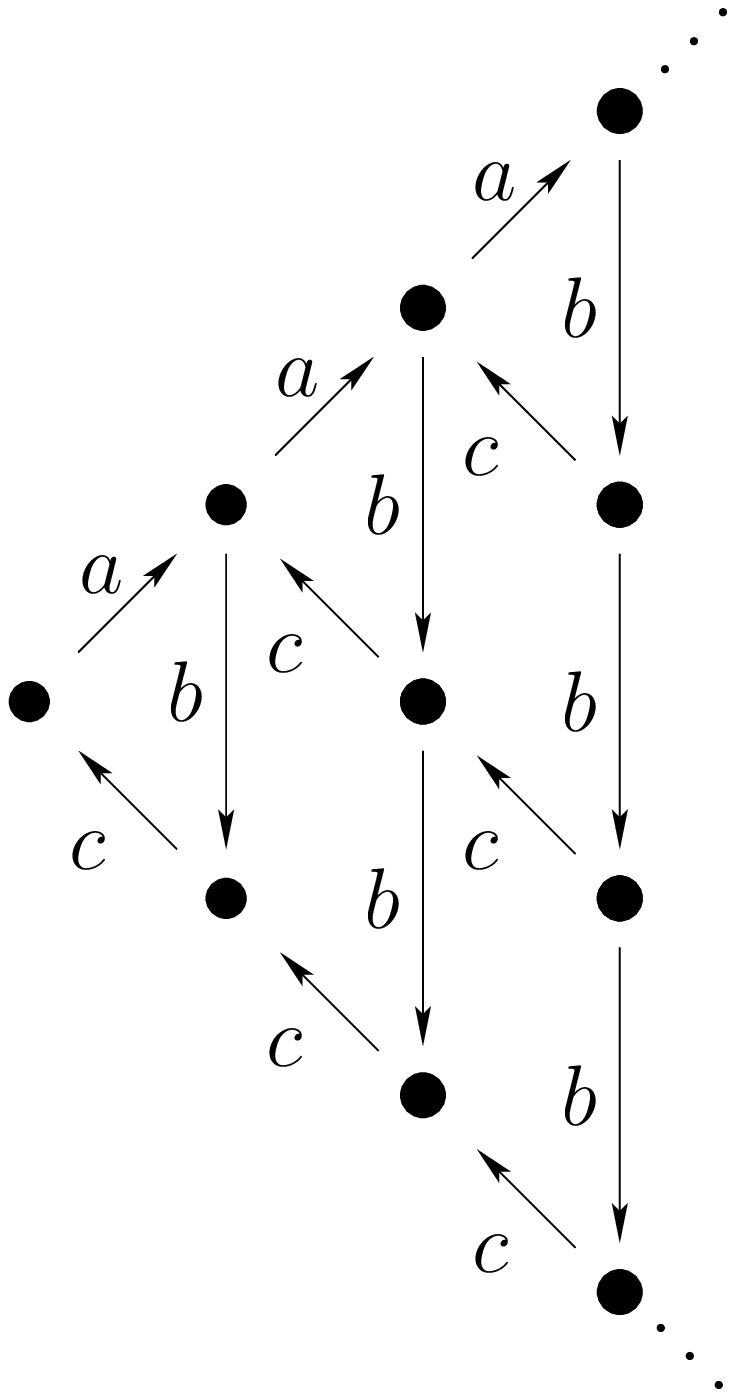}}
        \caption{The rewriting graph of a RTL (Example \ref{alex})}
        \label{fig:cone}
    \end{minipage}
\end{figure}

Before stating the main result (Theorem \ref{FOConfig}) of this section, we recall some basic definitions about word-automatic graphs.

  \paragraph{Word-automatic graphs.}Let $\Sigma$ be an alphabet and $\sharp\notin \Sigma$ a new symbol. The synchronization of two $\Sigma$-words, $u=a_1\dots a_m$ and $v=b_1\dots b_n$, is the $(\Sigma\ \dot \cup\ \{\sharp\})^2$-word $u\otimes v$ defined by \\ $u\otimes v:=(a_1,b_1)\dots (a_k,b_k)(x_{k+1},y_{k+1})\dots(x_K,y_K)$, where $k=\minimum(m,n)$, $K=\maximum(m,n)$ and for every $k< i \leqslant K$, $(x_i,y_i)=(\sharp,b_i)$ if $k=m$ and  $(x_i,y_i)=(a_i,\sharp)$ if not.

A $\Lambda$-graph ($\Lambda$ a finite alphabet) $G$ is \emph{word-automatic} if there exists a regular word language $L_{V_G}$ and a bijection  $\nu:L_{V_G}\longrightarrow V_G$ such that for each $\lambda\in \Lambda$, the synchronized word language $L_{\lambda}=\{\nu^{-1}(s)\otimes \nu^{-1}(t)\mid s\xrightarrow[G]{\lambda} t\}$ is regular.

 The following proposition recalls that the domain and the image of any word-automatic relation is regular.
 \begin{lemme}\label{projection}If a  language $L$ of $(\Sigma\ \dot \cup\ \{\sharp\})^2$-words $u\otimes v$ is regular, then the languages $\{u\in \Sigma^*\mid \exists v\in \Sigma^*\  u\otimes v\in L\}$ and $\{v\in \Sigma^*\mid \exists u\in \Sigma^*\ u\otimes v\in L\}$ are regular.
   \end{lemme}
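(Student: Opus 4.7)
The plan is to reduce the statement to the classical closure of regular languages under letter-to-letter morphisms (including morphisms that send some letters to $\varepsilon$), rather than build a projection NFA by hand.

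First, I would define the monoid morphism
\[
  h_1 : \bigl((\Sigma\ \dot\cup\ \{\sharp\})\times(\Sigma\ \dot\cup\ \{\sharp\})\bigr)^* \longrightarrow \Sigma^*
\]
by its action on letters: $h_1(a,b)=a$ whenever $a\in\Sigma$, and $h_1(\sharp,b)=\varepsilon$. The key verification is that for every pair of $\Sigma$-words $u=a_1\dots a_m$ and $v=b_1\dots b_n$, one has $h_1(u\otimes v)=u$. This is a short case analysis on the comparison of $m$ and $n$: on the synchronized prefix of length $k=\minimum(m,n)$ every letter of $u\otimes v$ has first component in $\Sigma$, so $h_1$ reads off $a_1\cdots a_k$; on the tail of length $K-k=\maximum(m,n)-\minimum(m,n)$, if $m\ge n$ the tail letters are $(a_i,\sharp)$ with $a_i\in\Sigma$ and $h_1$ outputs $a_{k+1}\cdots a_m$, while if $m<n$ the tail letters are $(\sharp,b_i)$ on which $h_1$ outputs $\varepsilon$. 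Either way the result is exactly $u$.

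It follows that
\[
  \{u\in\Sigma^*\mid \exists v\in\Sigma^*\ u\otimes v\in L\} \;=\; h_1(L).
\]
Since $h_1$ is a monoid morphism and regular languages are closed under morphic images (even erasing ones), the right-hand side is regular. The symmetric argument with the morphism $h_2$ defined by $h_2(a,b)=b$ if $b\in\Sigma$ and $h_2(a,\sharp)=\varepsilon$ gives the regularity of the second projection.

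There is essentially no obstacle here: the only thing to be careful about is the $\sharp$-padding convention, which is why the $h_i$ have to erase a letter rather than being letter-to-letter. Once that is set up, both conclusions follow from a standard closure property, so the proof is a few lines in total.
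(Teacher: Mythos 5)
Your proof is correct. Note that the paper states this lemma without any proof at all (it is a folklore closure property of automatic relations), so there is nothing to compare against; your reduction to closure of regular languages under (erasing) monoid morphisms, together with the explicit verification that $h_1(u\otimes v)=u$ under both padding cases of the $\otimes$ convention, is exactly the standard argument one would give, and it is complete. The alternative, equally standard route would be to build the projection NFA directly by nondeterministically guessing the second component, but the morphism formulation is cleaner and you lose nothing by it.
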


   \begin{remarque}By Lemma \ref{projection}, a $\Lambda$-graph is word-automatic if and only if there exists a bijection $\nu:L \longrightarrow V$, where $L\in \Reg(\Sigma^*)$ and $V\supseteq V_G$ such that  for each $\lambda\in\Lambda$, the $(\Sigma\ \dot \cup\ \{\sharp\})^2$-word language $L_{\lambda}=\{\nu^{-1}(s)\otimes \nu^{-1}(t)\mid s\xrightarrow[G]{\lambda} t\}$ is regular.

     \end{remarque}

 \begin{remarque}\label{RegulierEtAutomatique}Let $L$ be a regular $P$-word language. Then the  $(P\ \dot \cup\ \{\sharp\})^2$-word language $\{u\otimes v\mid u,v\in L\}$ is regular. In particular, the $(I_D^{-}\ \dot\cup\ \{\sharp\})^2$-word language $\{\foata{s}\otimes\foata{t}\mid s,t\in M(\Sigma,D)\}$ is regular.

 \end{remarque}

The following theorem is partially due to the unique encoding of any trace by its Foata normal form.

\begin{theo}\label{FOConfig}
The rewriting graph of a recognizable trace rewriting with level-regular contexts (RTL graph) is word-automatic.
\end{theo}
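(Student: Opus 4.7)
The plan is to encode each vertex by its Foata normal form. By Proposition~\ref{foata}, the map $\nu := \Pi_{\Foata}$ is a bijection from $\Foata$ onto $M(\Sigma,D)$, and $\Foata$ is regular by Lemma~\ref{levelautomata}. Using the remark after Lemma~\ref{projection}, it suffices to show, for each label $\lambda$, regularity of $L_\lambda := \{\foata{s} \otimes \foata{t} : s \xrightarrow[\graph_{\Rrond}]{\lambda} t\}$. Since $\Rrond$ is finite, $L_\lambda$ is a finite union over rules labelled $\lambda$, so one fixes a single rule $\rond{U}\cdot(\rond{V}\xrightarrow{\lambda}\rond{W})$ and proves regularity of
\[
L_r := \{\foata{[uv]}\otimes\foata{[uw]} : [u]\in\rond{U},\ [v]\in\rond{V},\ [w]\in\rond{W}\}.
\]

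For a fixed rule, I would build a non-deterministic synchronous automaton reading $\foata{[uv]} \otimes \foata{[uw]}$ level by level, with $D_i$ the $i$-th level of the first tape and $E_i$ of the second. At each step the automaton guesses a common subset $A_i \subseteq D_i \cap E_i$, intended to be the letters of the shared prefix $[u]$ occurring at that level, and sets $V_i := D_i \setminus A_i$ and $W_i := E_i \setminus A_i$. It runs three regular recognizers in parallel: one checking that the $A$-sequence, after trimming trailing empty levels, lies in $\Foata \cap \Pi_{\Foata}^{-1}(\rond{U})$ (regular by level-regularity of $\rond{U}$); another checking $V_1 V_2 \cdots \in \Pi_{I_D}^{-1}(\rond{V})$; and a third checking $W_1 W_2 \cdots \in \Pi_{I_D}^{-1}(\rond{W})$. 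Both $\Pi_{I_D}$-preimages are regular by Proposition~\ref{Recreg}, since $\rond{V}$ and $\rond{W}$ are recognizable.

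Two structural observations about Foata forms of concatenations drive the correctness. First, each letter of $u$ occupies the same level in $\foata{[uv]}$ as in $\foata{[u]}$, because $v$-letters are strictly later in any linearization of $[uv]$ and therefore cannot lie below $u$-letters in the partial order; this forces the non-empty $A_i$ to form an initial segment equal to $\foata{[u]}$. Second, a $u$-letter at level $j$ and a $v$-letter at level $i<j$ of $\foata{[uv]}$ are necessarily $D$-independent: a dependency would put the $v$-letter strictly above the $u$-letter in the partial order and force its height beyond $j$. This independence lets $\Pi_{I_D}(V_i)$ commute with every $\Pi_{I_D}(A_j)$ for $j>i$, yielding
\[
\Pi_{I_D}(A_1 A_2\cdots)\cdot\Pi_{I_D}(V_1 V_2\cdots) \;=\; \Pi_{I_D}(D_1 D_2\cdots) \;=\; [uv],
\]
so the trace $[v'] := \Pi_{I_D}(V_1 V_2\cdots)$ recovered from the $V$-sequence satisfies $[u]\cdot[v'] = [uv]$, and symmetrically for $[w']$. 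Conversely, given a genuine factorization $[uv] = [u]\cdot[v]$ with $[u]\in\rond{U}$ and $[v]\in\rond{V}$, marking each letter-occurrence of $\foata{[uv]}$ by its actual provenance produces a partition satisfying all of the above regular constraints, giving an accepting run.

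The main obstacle I anticipate is the temptation to build a direct synchronous transducer mapping $(\foata{[u]}, \foata{[v]})$ to $\foata{[uv]}$: letters of $v$ can migrate to arbitrarily early levels of $\foata{[uv]}$ while other letters of $v$ stay blocked by dependencies with late levels of $\foata{[u]}$, so the naive pending-letter buffer needed by such a transducer is not bounded. Guessing the partition of the \emph{output} Foata instead sidesteps this difficulty entirely: the automaton never reconstructs $\foata{[v]}$ from $\foata{[uv]}$, and the non-canonical $I_D$-encoding $V_1 V_2 \cdots$ suffices to test membership in $\rond{V}$ thanks to Proposition~\ref{Recreg}.
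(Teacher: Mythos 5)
Your proposal is correct and follows essentially the same route as the paper: encode vertices by Foata normal forms, reduce to a single rule, and recognize $\foata{[uv]}\otimes\foata{[uw]}$ by a product automaton that nondeterministically splits each level into a common part (checked against the regular language $\Pi_{\Foata}^{-1}(\rond{U})$, using level-regularity) and two remainders checked against $\Pi_{I_D}^{-1}(\rond{V})$ and $\Pi_{I_D}^{-1}(\rond{W})$. Your two ``structural observations'' are exactly the content of the paper's Lemma~\ref{ConcatEtFoata}, and your guessed-partition automaton is the paper's three-component synchronized automaton.
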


Theorem \ref{FOConfig} is no more guaranteed if we suppose that left and right hand sides are just level-regular (see Remark \ref{cotereg}).

\begin{corollaire}[\cite{DBLP:journals/tcs/Hodgson82}]
  The $\FO$ theory of a RTL graph is decidable.
\end{corollaire}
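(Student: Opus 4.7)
The plan is to invoke Theorem~\ref{FOConfig} and reduce the corollary to the classical Hodgson-style fact that every word-automatic structure has a decidable first-order theory. The route is, as usual, a structural induction on FO-formulas showing that every $k$-ary FO-definable relation on $V_G$ is a synchronous regular relation over the Foata-encoding alphabet.

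Concretely, I would first fix, by Theorem~\ref{FOConfig}, a regular vertex language $L_{V_G}$ together with a bijection $\nu:L_{V_G}\to V_G$ such that each edge relation $L_\lambda=\{\nu^{-1}(s)\otimes \nu^{-1}(t)\mid s\xrightarrow[G]{\lambda} t\}$ is regular. Then I would proceed by induction on the structure of FO-formulas $\varphi(x_1,\ldots,x_k)$, building, from each $\varphi$, a finite automaton that recognises the set of synchronous $k$-tuples $\nu^{-1}(s_1)\otimes \cdots \otimes \nu^{-1}(s_k)$ satisfying $\varphi$. The atomic cases are immediate: $E_\lambda(x,y)$ is handled by $L_\lambda$ itself, and equality $x=y$ by the regular diagonal $\{u\otimes u\mid u\in L_{V_G}\}$ (Remark~\ref{RegulierEtAutomatique}). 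Boolean connectives go through by closure of regular languages under intersection, union, and complement, the latter taken relative to the (regular) set of valid $k$-track encodings of tuples in $L_{V_G}^k$. Existential quantification $\exists x_i\,\varphi$ corresponds to erasing the $i$-th track of a synchronous tuple, which is a projection and hence preserves regularity by Lemma~\ref{projection}.

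For a closed formula $\varphi$, this construction outputs a regular language whose non-emptiness is equivalent to $G\models\varphi$, and emptiness of regular languages is decidable. I expect the only real technicality to be the routine extension of the binary synchronization $u\otimes v$ and of Lemma~\ref{projection} from pairs to $k$-tuples of words of possibly differing lengths; this is standard padding bookkeeping and requires no further trace-theoretic input beyond Theorem~\ref{FOConfig}.
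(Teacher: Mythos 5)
Your proposal is correct and matches the paper's intent: the paper derives this corollary directly from Theorem~\ref{FOConfig} by citing Hodgson's classical result that word-automatic structures have a decidable first-order theory, and your argument is precisely the standard proof of that cited result (induction on formulas, closure of synchronous regular relations under Boolean operations and projection, emptiness test for sentences). Nothing further is needed.
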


In order to prove Theorem \ref{FOConfig}, we set out a crucial property about compatibility between concatenation and Foata normal forms.

In general, $\foata{st}\neq\foata{s}\foata{t}$. Indeed, suppose $D=\{(a,a),(b,b)\}$. If $s=[a]$ and $t=[ab]$, then $\foata{s}=\{a\}$, $\foata{t}=\{a,b\}$ and $\foata{st}=\{a,b\}\{a\}$. The following lemma expresses some compatibility between concatenation and Foata normal form.
\begin{lemme}\label{ConcatEtFoata}
Let $s,t\in M(\Sigma,D)$ such that $\foata{s}=A_1\cdots A_p$ ($p\geqslant 0$) and $\foata{st}=B_1\cdots B_m$. Then  $m\geqslant p$, $A_i\subseteq B_i$ for each $1\leqslant i \leqslant p$ and $\Pi_{I_D}((B_1\setminus A_1)\cdots (B_p\setminus A_p) B_{p+1}\cdots B_m)=t$.
\end{lemme}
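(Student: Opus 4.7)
My plan is to exploit the standard representation of traces as labelled posets (the Mazurkiewicz dependence graphs of word representatives), for which the Foata normal form admits a very transparent description. Concretely, for $u \in M(\Sigma,D)$, let $P_u$ denote its dependence poset: one element per position of any word representative, two elements being comparable iff they are linked by a chain of pairwise-dependent letters. For each position $x$ of $P_u$, let $\ell_u(x)$ denote the length of the longest chain of $P_u$ ending at $x$. Then the Foata normal form $\foata{u} = C_1\cdots C_k$ satisfies $C_j = \{\text{lab}(x) \mid x\in P_u,\ \ell_u(x)=j\}$. The lemma thus reduces to comparing the levels induced by $s$, $t$ and $st$.

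The key observation is that $P_{st}$ is obtained by placing $P_t$ \emph{above} $P_s$: in any word representative $uv$ of $st$ with $[u]=s$ and $[v]=t$, every dependence chain is monotone in word position, so no position of $P_t$ sits below a position of $P_s$ inside $P_{st}$, and the restriction of $P_{st}$ to the positions of $s$ is exactly $P_s$. Consequently, for every $x \in P_s$, the predecessors of $x$ in $P_{st}$ coincide with its predecessors in $P_s$, so $\ell_{st}(x) = \ell_s(x)$. Passing to label sets level by level immediately yields $A_i \subseteq B_i$ for every $1 \leqslant i \leqslant p$, together with $m \geqslant p$ since the top level of $P_s$ persists at index $p$ in $P_{st}$.

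It remains to check that $\Pi_{I_D}\bigl((B_1\setminus A_1)\cdots (B_p\setminus A_p)\, B_{p+1}\cdots B_m\bigr)=t$. By the previous observation, the sets $B_i \setminus A_i$ (for $i \leqslant p$) and $B_i$ (for $p < i \leqslant m$) enumerate the labels of the positions of $P_t$, grouped by their $\ell_{st}$-level. Each such block belongs to $I_D$: two positions sharing an $\ell_{st}$-level are incomparable in $P_{st}$ and hence carry distinct, pairwise independent labels. Moreover, if $y <_t y'$ then $y <_{st} y'$ and therefore $\ell_{st}(y) < \ell_{st}(y')$, so the ordering of the blocks respects the partial order on $P_t$. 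Concatenating any linearisation of each block then produces a word representative of $t$, and since letters inside a single block commute, applying $\Pi_{I_D}$ to the block-by-block listing returns exactly $t$.

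The one delicate point to pin down rigorously is the preservation of predecessors for positions of $P_s$ when passing from $P_s$ to $P_{st}$: this is where the left-to-right asymmetry of the concatenation $st$ enters, and once it is established the three conclusions follow by direct translation between the poset and $I_D$-word descriptions of a trace.
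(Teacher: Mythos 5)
Your argument is correct, but it takes a genuinely different route from the paper's, which simply proceeds by induction on the length of $t$. In the inductive approach all the work is concentrated in the one-letter step: the standard update rule describing how $\foata{wa}$ is obtained from $\foata{w}=B_1\cdots B_m$ by inserting $a$ into the block following the last one that contains a letter dependent on $a$ (or into a new final block); iterating over the letters of $t$ gives the monotone growth $A_i\subseteq B_i$ and, by bookkeeping of the inserted letters, the identity $\Pi_{I_D}((B_1\setminus A_1)\cdots(B_p\setminus A_p)B_{p+1}\cdots B_m)=t$. You instead argue globally on the dependence poset, invoking the characterization of the $j$-th Foata block of $u$ as the set of labels of the positions of height $j$ in $P_u$, together with the observation that in $P_{st}$ no position coming from $t$ lies below a position coming from $s$, so that heights of $s$-positions are preserved. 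This yields all three conclusions in one pass and makes the final identity conceptually transparent, at the price of importing the poset description of traces and of the Foata normal form, which the paper never introduces; the induction is more elementary and self-contained but hides the same combinatorics inside the single-letter case analysis. One step you should make fully explicit is that $B_i\setminus A_i$ is \emph{exactly} the set of labels of the $t$-positions at level $i$, not merely a superset of the leftover labels: this requires that an $s$-position and a $t$-position on the same level never carry the same letter, which follows from reflexivity of $D$ and the incomparability of same-level positions --- your remark that positions sharing a level carry distinct, pairwise independent labels does cover this, but it is doing more work there than the phrase ``by the previous observation'' suggests.
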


\begin{proof} By induction on the length of $t$.
\end{proof}

 In the following, for $\foata{s}=A_1\cdots A_p$ ($p\geqslant 0$) and $t\in M(\Sigma,D)$, denote by $\foata{s}\parallel t$ the $I_D$-word language $B_1\cdots B_m$ ($m\geqslant p$) such that  $A_i\subseteq B_i$ for each $1\leqslant i \leqslant p$ and $\Pi_{I_D}((B_1\setminus A_1)\cdots (B_p\setminus A_p) B_{p+1}\cdots B_m)=t$. Thus  $\foata{st}\in \foata{s}\parallel t$, by the lemma above.

 \begin{exemple}
   Suppose $D=\{(a,a),(b,b)\}$ and consider $s=[aba]$ and $t=[ab]$. Then $\foata{s}=\{a,b\}\{a\}$ and $$\foata{s}\parallel t=\{a,b\}\{a\}\varnothing^*(\{a\}\varnothing^*\{b\}+\{b\}\varnothing^*\{a\}+\{a,b\})\varnothing^*\ \cup\ \{a,b\}\{a,b\}\varnothing^*\{a\}\varnothing^*$$
 \end{exemple}

\begin{proof}[Proof of Theorem \ref{FOConfig}]By Proposition \ref{foata} and Lemma \ref{levelautomata}, $\Pi_{\Foata}$ is a bijection from the regular language $\Foata$ onto $M(\Sigma,D)\supseteq V_{\graph_\Rrond}$. We are going to prove that for each $\lambda\in\Lambda$, the $(I_D\ \dot\cup\ \{\sharp\})^2$-word language  $L_{\lambda}=\{\foata{[u][v]}\otimes \foata{[u][w]}\mid [u]\in \rond{U},[v]\in \rond{V},[w]\in \rond{W},\ \rond{U}\cdot(\rond{V}\xrightarrow{\lambda} \rond{W})\in \Rrond\}$ is regular.
  \\Let $\rond{U}\cdot(\rond{V}\xrightarrow{\lambda} \rond{W})$ be a rule in $\Rrond$.  We have to prove that the $(I_D\ \dot\cup\ \{\sharp\})^2$-word language  $\{\foata{[u][v]}\otimes \foata{[u][w]}\mid [u]\in \rond{U},[v]\in \rond{V},[w]\in \rond{W}\}$ is regular. By Lemma \ref{ConcatEtFoata} and Remark \ref{RegulierEtAutomatique} and because the intersection of two regular word languages is regular, it suffices to show that the language of $(I_D\ \dot\cup\ \{\sharp\})^2$-words of the form $X\otimes Y$ such that there exists $[u]\in \rond{U}$, $[v]\in \rond{V}$ and $[w]\in \rond{W}$ such that $X\in \foata{[u]}\parallel [v]$ and $Y\in \foata{[u]}\parallel [w]$, is regular. For this, consider the  $I_D$-automata $\rond{A}_1$, $\rond{A}_2$ et $\rond{A}_3$ that recognize respectively $\{\foata{u}\mid [u]\in \rond{U}\}$, $\Pi_{I_D}^{-1}(\rond{V})$ and $\Pi_{I_D}^{-1}(\rond{W})$ and define the following $(I_D\ \dot\cup\ \{\sharp\})^2$-automaton.
  \begin{itemize}

    \item The initial state is $(i_{\rond{A}_1},i_{\rond{A}_2},i_{\rond{A}_3})$

    \item  the $(I_D\ \dot\cup\ \{\sharp\})^2$-graph is given by

      $(p,q,r)\xrightarrow{A\dot\cup B / A\dot\cup C} (p',q',r')$ : $p\xrightarrow[\rond{A}_1]{A} p'$, $q\xrightarrow[\rond{A}_2]{B} q'$, $r\xrightarrow[\rond{A}_3]{C} r'$

  $(p,q,r)\xrightarrow{B/C} (\bot,q',r')$ : $p\in F_{\rond{A}_1}\cup \{\bot\}$, $q\xrightarrow[\rond{A}_2]{B} q'$, $r\xrightarrow[\rond{A}_3]{C} r'$


  $(p,q,r)\xrightarrow{\sharp/C} (\bot,\bot,r')$ : $p\in F_{\rond{A}_1}\cup \{\bot\}$, $q\in F_{\rond{A}_2}$, $r\xrightarrow[\rond{A}_3]{C} r'$


  $(\bot,\bot,r)\xrightarrow{\sharp/C} (\bot,\bot,r')$ : $r\xrightarrow[\rond{A}_3]{C} r'$

    $(p,q,r)\xrightarrow{B/\sharp} (\bot,q',\bot)$ : $p\in F_{\rond{A}_1}\cup \{\bot\}$, $r\in F_{\rond{A}_3}$, $q\xrightarrow[\rond{A}_2]{B} q'$


    $(\bot,q,\bot)\xrightarrow{B/\sharp} (\bot,q'\bot)$ : $q\xrightarrow[\rond{A}_2]{B} q'$

  \item the set of  final states is $F=\{(p,q,r)\mid p\in F_{\rond{A}_1}\cup\{\bot\},q\in F_{\rond{A}_2},r\in F_{\rond{A}_3}\}\cup\{(\bot,\bot,r)\mid r\in F_{\rond{A}_3}\}\cup\{(\bot,q,\bot)\mid q\in F_{\rond{A}_2}\}$.

    \end{itemize}

  \end{proof}

\begin{remarque}\label{cotereg}Suppose $D=\{(a,a),(b,b),(c,c)\}$ and consider the following rewriting rule: $[(ab)^*]([\varepsilon]\xrightarrow{}[(bc)^*])$. Observe that $[(ab)^*]$ and $[(bc)^*]$ are level-regular but not recognizable. Recall that if a relation is word-automatic, then its image is regular (Proposition \ref{projection}). The rewriting graph of this rewriting rule fails to be word-automatic by encoding its vertex set by their Foata normal forms because $\Pi_{\Foata}^{-1}([(ab)^*]\cdot [(bc)^*])$ is not regular.

\end{remarque}

The $\FOAccs$ theory of a RTL graph may fail to be decidable. Indeed, the halting problem of 2-counter Minsky machines can be encoded by RTL graphs. 

\begin{prop}\label{Minsky}There exists some RTL graphs that does not have a decidable $\FOAccs$ theory.
\end{prop}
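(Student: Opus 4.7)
The plan is to encode the halting problem of a deterministic 2-counter Minsky machine $\mathcal{M}$---undecidable by Minsky's theorem---into an $\FOAccs$ sentence over a single RTL graph $\graph_\Rrond$ that I build by hand; decidability of the $\FOAccs$ theory of $\graph_\Rrond$ would then decide halting, a contradiction. Fix $\mathcal{M}$ with state set $Q$, initial state $q_0$, and halting state $q_h$, and consider the dependence alphabet $\Sigma = \{a,b\} \cup Q$ with the sole independent pair being $aIb$; every other pair is dependent. Each configuration $(q,n,m)$ of $\mathcal{M}$ then corresponds to the trace $[a^n b^m q]$, and because each $q\in Q$ is dependent on $a$, on $b$, and on every other state letter, the state letter is rigidly pinned at the rightmost position in every linear representative.

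Next I translate each Minsky instruction into a rewrite rule. An increment of $c_1$ at $q$ going to $q'$ becomes $M(\Sigma,D) \cdot ([q] \xrightarrow{\lambda} [aq'])$. A conditional test-and-decrement of $c_1$ at $q$ (target $q'$ when $c_1>0$, target $q''$ when $c_1=0$) splits into $M(\Sigma,D) \cdot ([aq] \xrightarrow{\lambda_1} [q'])$ together with $[b^{*}] \cdot ([q] \xrightarrow{\lambda_2} [q''])$. The trace language $[b^{*}]$ is recognizable (witnessed by the morphism into $(\{0,1\},\cdot)$ sending $b\mapsto 1$ and every other letter to $0$), hence level-regular; the $a$ in the left-hand side of the first rule enforces $c_1>0$, and the context $[b^{*}]$ enforces $c_1=0$. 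Symmetric rules handle $c_2$. Finally I add a marker rule $M(\Sigma,D) \cdot ([q_h] \xrightarrow{\mathrm{halt}} [q_h])$ producing a self-loop on every trace that carries the halting state.

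The main obstacle---and really the heart of the proof---is to verify that $\graph_\Rrond$ faithfully simulates $\mathcal{M}$ on traces reachable from $[q_0]$: every reachable trace contains exactly one state letter, placed at the end; for every well-formed configuration and every rule, the decomposition $uv = \text{configuration}$ with $v$ in the left-hand side exists if and only if the corresponding Minsky transition is enabled, and when it exists it is unique; and the resulting trace $uw$ encodes exactly the successor Minsky configuration. All three facts reduce to the rigid positioning of state letters forced by the dependence structure, which eliminates every spurious decomposition; an induction on the number of rewriting steps starting from $[q_0]$ then propagates the invariant. Once this is established, $\mathcal{M}$ halts from $(q_0,0,0)$ if and only if
\[ \graph_\Rrond \;\models\; \exists x\, \bigl([q_0] \xrightarrow{*} x \;\wedge\; x \xrightarrow{\mathrm{halt}} x\bigr), \]
which is an $\FOAccs$ sentence, and the undecidability claim follows.
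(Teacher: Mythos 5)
Your reduction is essentially the paper's: encode a configuration $(q,n,m)$ of a 2-counter Minsky machine as a trace $[a^n b^m q]$ with $aIb$ and the state letter dependent on everything, translate each instruction into a rewrite rule, and express halting as an $\FOAccs$ sentence. Your zero test via the restricted (recognizable, hence level-regular) context $[b^*]$ is a legitimate variant of the paper's mechanism, which instead keeps the context equal to $M(\Sigma,D)$ throughout and uses bottom markers $\bot_a,\bot_b$ (with $\bot_a I\bot_b$) so that the left-hand side $[\bot_a k]$ matches exactly when no $a$ separates $\bot_a$ from the state letter. Both versions correctly exploit cancellativity of the trace monoid and the rigidity of dependent letters, and your check of faithfulness on the traces reachable from $[q_0]$ is the right invariant to establish.

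The one genuine gap is your final formula: $\exists x\,([q_0]\xrightarrow{*}x\wedge x\xrightarrow{\mathrm{halt}}x)$ is not a sentence of first-order logic over the signature of $\graph_\Rrond$, because $[q_0]$ is a specific vertex and the signature provides no constant for it; nor have you argued that $[q_0]$ is FO-definable in the rewriting graph, which is not obvious since the graph carries edges incident to \emph{all} traces, not only those reachable from $[q_0]$. Without this, decidability of the $\FOAccs$ theory does not yet decide halting. The repair is exactly parallel to your $\mathrm{halt}$ rule and is what the paper does: add the rule $\{[q_0]\}\cdot([\varepsilon]\xrightarrow{\mathrm{init}}[\varepsilon])$, whose context is a singleton and therefore recognizable, hence level-regular, and replace your formula by the genuine sentence $\exists x\,\exists y\,(x\xrightarrow{\mathrm{init}}x\wedge y\xrightarrow{\mathrm{halt}}y\wedge x\xrightarrow{*}y)$. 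With that amendment the argument is complete.
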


Before proving the proposition above, let us recall some basic definitions about 2-counter Minsky machines.

A 2-counter Minsky machine $M$ of length $n$ is a  sequence of $n$ instructions. The $n$-th instruction is a special instruction that halts the machine and for each $k\in\{1,\dots,n-1\}$ the $k$-th instruction is of the form

\begin{tabular}{|lll}
  $k:$ & $c:=c+1; \goto(j)$& ($\Incr(c,j)$)\\
\end{tabular}

or

\begin{tabular}{|lll}
  $k:$ & $\text{if } c\neq 0 \text{ then } c:=c-1; \goto(j)  \text{ else } \goto(l)$&($\Decr(c,j,l)$)
\end{tabular}
\\where $j,l\in\{1,\dots,n\}$ and $c$ is one of the 2 counters.

Configurations of $M$ are the triples $(k,c_1,c_2)\in\{1,\dots,n\}\times \N\times \N$, where $k$ is the instruction number, and $c_1$ and $c_2$ the 2-counter contents. The initial configuration is $(1,0,0)$. A computation is a sequence of configurations starting from the initial configuration and such that two successive configurations respect the instructions. The halting problem is: given a 2-counter Minsky machine, is there a finite computation that halts the machine ?

\begin{theo}[Minsky]
The halting problem of 2-counter Minsky machines is undecidable.
\end{theo}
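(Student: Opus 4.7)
The plan is to prove undecidability by reducing from the halting problem for Turing machines, which was established to be undecidable by Turing. The reduction proceeds in two stages: first simulate a Turing machine by a counter machine with a fixed (larger) number of counters, then collapse that counter machine down to one with only two counters.

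For the first stage, fix a Turing machine $T$ with tape alphabet $\Gamma$. A configuration of $T$ consists of a control state and the tape content on each side of the head. Encode each half of the tape as a natural number by reading its contents as the digits of a number in base $|\Gamma|$, with the digit nearest the head as least significant. Two counters then store the two half-tapes, and a few additional counters serve as scratch space. Moving the head one cell amounts to multiplying one half-counter by $|\Gamma|$ and adding the digit that was read, while dividing the other half-counter by $|\Gamma|$ and recording the remainder; writing and reading the symbol under the head corresponds to inspecting and modifying the low-order digit. Each primitive (multiplication by a fixed constant, division by a fixed constant with remainder) can be expanded into a finite macro of Minsky instructions built from increment, decrement, conditional test-for-zero, and unconditional $\goto$, using scratch counters for the standard transfer-and-count loops. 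The finite control of $T$ is absorbed into the line numbers of the Minsky program, so that $T$ halts on the empty input if and only if the resulting counter machine halts on the all-zero initial configuration.

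For the second stage, apply the G\"odel encoding of Minsky. If Stage~1 produced $k$ counters holding $(n_1,\dots,n_k)$, pick distinct primes $p_1,\dots,p_k$ and maintain in counter~$1$ the single number $p_1^{n_1}\cdots p_k^{n_k}$, keeping counter~$2$ as scratch. Incrementing $n_i$ becomes multiplication of counter~$1$ by $p_i$; decrementing $n_i$ when $n_i>0$ becomes exact division by $p_i$; testing $n_i=0$ reduces to testing whether counter~$1$ is not divisible by $p_i$. Each of these three macros can be implemented using only counter~$2$: empty counter~$1$ into counter~$2$ in blocks of $p_i$ (branching according to whether any leftover unit remains), then transfer back while applying the intended arithmetic factor. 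Composing Stages~$1$ and~$2$ yields, effectively from $T$, a $2$-counter Minsky machine that halts if and only if $T$ halts on the empty input, and undecidability transfers along this many-one reduction.

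The main obstacle is the faithful arithmetic simulation of Stage~$2$: one must verify that the multiply-by-$p_i$, divide-by-$p_i$, and divisibility-test macros never spuriously halt or diverge, and that they preserve the invariant $\text{counter}_1=\prod_{i=1}^k p_i^{n_i}$ after every simulated primitive, including on the ``decrement of zero'' branch, which must be detected before any destructive division is performed. Once these macros are written and their correctness checked by induction on the number of simulated steps, the two reductions compose mechanically and the theorem follows.
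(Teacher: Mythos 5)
The paper states this theorem as a classical result of Minsky and gives no proof of its own, so there is nothing to diverge from: your two-stage reduction (Turing machine to a multicounter machine via base-$\lvert\Gamma\rvert$ encoding of the two half-tapes, then down to two counters via prime-power G\"odel coding with one scratch counter) is precisely the standard argument the citation refers to, and it is correct in outline. The one detail worth making explicit is that the G\"odel code of the all-zero $k$-counter configuration is $1$, whereas the paper's machines start in configuration $(1,0,0)$ with both counters at $0$, so the simulating machine needs one initial increment of counter~$1$ before entering the simulation loop; this is a one-instruction fix and does not affect the reduction.
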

              
\begin{proof}[Proof of Proposition \ref{Minsky}]

Given a 2-counter Minsky machine  $M$ of length $n$, consider the rewriting graph $G_M$ of the following recognizable trace rewriting system:
  
\begin{itemize}
\item $\Sigma:=\{\bot_a,\bot_b,a,b,1,\dots,n\}$

\item the independence relation $I$ on $\Sigma$ is given by: $aIb$, $\bot_aI\bot_b$

\item for each $k\in\{1,\dots,n-1\}$ the rewriting rules are:
  \begin{itemize}
  \item $[k]\xrightarrow{R} [cj]$ ($j\in\{1,\dots,n\}$, $c\in\{a,b\}$) if the $k$-th instruction is $\Incr(c,j)$
  \item $[ck] \xrightarrow{R} [j]$ 
  
  and 
  
  $[\bot_ck]\xrightarrow{R} [\bot_cl]$ ($j,l\in\{1,\dots,n\}$, $c\in\{a,b\}$) if the $k$-th instruction is $\Decr(c,j,l)$

 \item $[\bot_a\bot_b1]([\varepsilon]\xrightarrow{i}[\varepsilon])$

\item $[n]([\varepsilon]\xrightarrow{f}[\varepsilon])$

  \end{itemize}

\end{itemize}

The initial configuration is encoded by $[\bot_a\bot_b1]$. Final configurations are encoded by $[\bot_a\bot_ba^*b^*n]$.
A configuration $(k,c_1,c_2)$ accessible from $[\bot_a\bot_b1]$ is encoded by the trace $[\bot_a\bot_b\overbrace{a\dots a}^{c_1}\overbrace{b\dots b}^{c_2}k]$.
\\The machine $M$ halts if and only if $G_M$ satisfies: $\exists x\exists y (x\xrightarrow{i}x\ \land\  y\xrightarrow{f}y\ \land\ x\xrightarrow{*}y)$.

\end{proof}

\section{Concurrent unfolding of a concurrent automaton}In this section, we consider concurrent automata, that were first introduced in \cite{Shields1997} as asynchronous transition systems, and we prove that the $\FOAccs$ theory of their concurrent unfoldings is decidable. Indeed, we will show that the concurrent unfolding of a concurrent automaton, with the reachability relation is a RTL graph.

 Let $(\Sigma,D)$ be a dependence alphabet and $I=\Sigma^2\setminus D$.
 
\subsection{Concurrent automata}
\begin{df}An $\Sigma$-automaton $\rond{A}=(G,i,F)$ is $D$-concurrent when
   \begin{itemize}
   \item $G$ is deterministic 
   \item $((a,b)\in I \text{ and } p\xrightarrow[\rond{A}]{ab}q) \ \Longrightarrow \ p\xrightarrow[\rond{A}]{ba}q$.
   \end{itemize}
 \end{df}

Every automaton can  be seen as a concurrent automaton relatively to the total dependence relation on its edge label set.

 \begin{exemple}\label{automateresidu}Let $\rond{L}\subseteq M(\Sigma,D)$ be a trace language. The residual automaton of $\rond{L}$ by $\Sigma$ is the $D$-concurrent $\Sigma$-automaton $\residual{\rond{L}}{\Sigma}$ defined by:
    \begin{itemize}
    \item the $\Sigma$-graph $\{[u]^{-1}\rond{L}\xrightarrow{a} [ua]^{-1}\rond{L}\mid u\in \Sigma^*, a\in \Sigma\}$
    \item the initial state $\rond{L}$

    \item final states $[u]^{-1}\rond{L}$ such that $[\varepsilon]\in [u]^{-1}\rond{L}$,
    \end{itemize}

is a $D$-concurrent $\Sigma$-automaton that recognises $\cup\rond{L}$ (see Figure \ref{fig:automateresidus}).  
   
  \end{exemple}

\begin{exemple}Let $\rond{L}\subseteq M(\Sigma,D)$ be a trace language. The unfolding automaton $U(\rond{L},\Sigma)$ of $\rond{L}$ by $\Sigma$ defined by

  \begin{itemize}
  \item the Cayley graph of $M(\Sigma,D)$: $\{[u]\xrightarrow{a}[ua]\mid u\in\Sigma^*,a\in \Sigma\}$
  \item the initial state $[\varepsilon]$
    \item final states $t\in \rond{L}$
    \end{itemize}
is a $D$-concurrent $\Sigma$-automaton that recognises $\cup\rond{L}$.
  \end{exemple}

 \begin{figure}
   \centering
    \scalebox{0.7}{\includegraphics{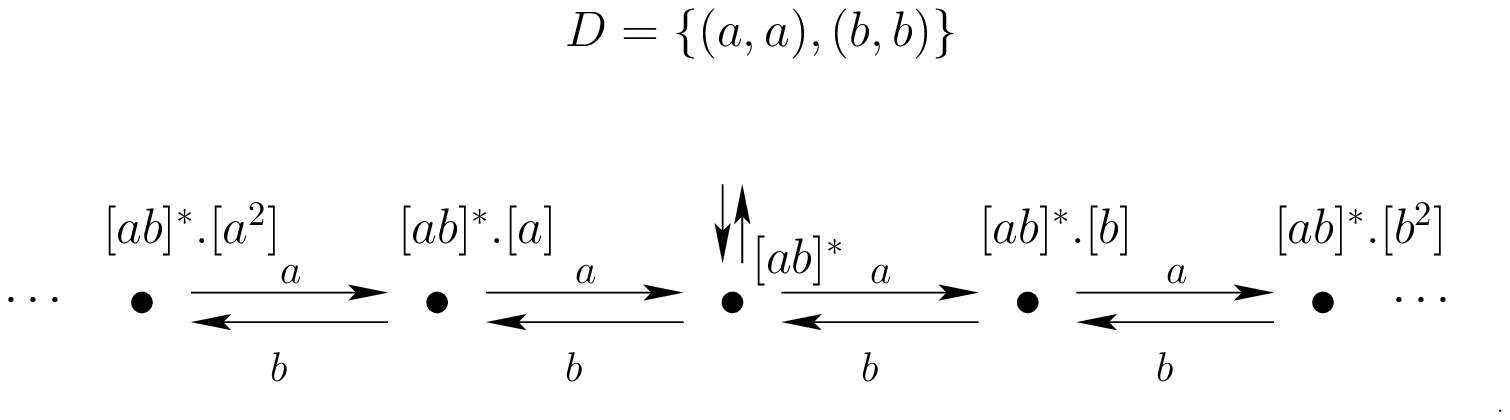}}
    \caption{$\residual{[(ab)^*]}{\Sigma}$ (Example \ref{automateresidu})}
   \label{fig:automateresidus}
  \end{figure}

By combining Proposition \ref{Recreg} and Example \ref{automateresidu}, we obtain the following characterization of recognizable trace languages: 
  
\begin{prop}
  A trace language $\rond{L}$ is recognizable if and only if there exists a finite $D$-concurrent $\Sigma$-automaton $\rond{A}$ such that $\cup \rond{L}=L(\rond{A})$.
\end{prop}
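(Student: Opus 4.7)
The proof splits into two directions, each essentially immediate from the ingredients already collected in the preliminaries.

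For the ``only if'' direction, assume $\rond{L}\in \REC(M(\Sigma,D))$. I would take as witness the residual automaton $\residual{\rond{L}}{\Sigma}$ introduced in Example \ref{automateresidu}. That example already supplies everything except finiteness: it establishes that $\residual{\rond{L}}{\Sigma}$ is a $D$-concurrent $\Sigma$-automaton recognizing $\cup\rond{L}$. To conclude, I would observe that its state set $\{[u]^{-1}\rond{L}\mid u\in \Sigma^*\}$ is exactly $\{s^{-1}\rond{L}\mid s\in M(\Sigma,D)\}$ (using that every trace $s$ is of the form $[u]$ for some $u\in\Sigma^*$), and this set is finite by Proposition \ref{finituderesidus}.

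For the ``if'' direction, assume a finite $D$-concurrent $\Sigma$-automaton $\rond{A}$ satisfying $L(\rond{A})=\cup\rond{L}$ is given. Since $\rond{A}$ is finite, $\cup\rond{L}$ is a regular $\Sigma$-word language. I would then invoke Proposition \ref{Recreg} with the canonical morphism $\pi:\Sigma^*\to M(\Sigma,D)$, noting that $\pi^{-1}(\rond{L})=\cup\rond{L}$; regularity of this preimage yields $\rond{L}\in\REC(M(\Sigma,D))$.

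I do not anticipate any real obstacle: both directions are a direct combination of Example \ref{automateresidu}, Proposition \ref{finituderesidus}, and Proposition \ref{Recreg}. Note that the $D$-concurrency hypothesis is in fact stronger than what the ``if'' direction requires (any finite $\Sigma$-automaton with the same language would do); it is included only to match the residual automaton produced in the forward direction and to make the characterization a genuine equivalence between recognizability and the existence of a finite \emph{concurrent} recognizer.
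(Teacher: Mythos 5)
Your proof is correct and takes essentially the same route as the paper, which offers no explicit proof but states that the proposition is obtained ``by combining Proposition~\ref{Recreg} and Example~\ref{automateresidu}'' --- exactly your two directions. Your extra appeal to Proposition~\ref{finituderesidus} to establish finiteness of the residual automaton is the right way to fill in the one detail the paper leaves implicit.
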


\subsection{The concurrent unfolding of a concurrent automaton}

\begin{df}
The $D$-unfolding $\U_D(\rond{A})$ of a $D$-concurrent $\Sigma$-automaton $\rond{A}$ is the $D$-concurrent $\Sigma\ \dot\cup\ \{f\}$-graph defined by: $$\U_D(\rond{A})=\{[u]\xrightarrow{a} [ua]\mid u\in \Sigma^*,a\in\Sigma,i\xrightarrow[\rond{A}]{ua}\}\ \cup\ \{[u]\xrightarrow{f} [u]\mid u\in L(\rond{A})\}.$$
\end{df}

In the following example, we introduce the infinite grid tree as the concurrent unfolding of a finite concurrent automaton.

\begin{exemple}
Let $\Sigma=\{a,b,c\}$ and suppose $aIb$. Consider the graph $G=\{p\xrightarrow{a,b,c}p\}$. The $D$-unfolding of $\rond{A}=(G,p,\varnothing)$ (Figure \ref{infinitegridtree}), is the infinite grid (on $\{a,b\}$) tree (see Section \ref{graphtree}) and has a decidable $\FOAccs$ theory by the theorem below.
\end{exemple}

Before stating the main result of this section, recall that the unfolding of a finite graph is a regular tree whose monadic second-order theory is decidable (since unfolding preserves monadic second-order decidability). Here, we consider a notion of concurrent unfolding and we apply this graph transformation to a wider class than the class of finite graphs.
\begin{theo}\label{FOReach}
If $\rond{A}$ is a finite $D$-concurrent automaton, then the  $\FOAccs$ theory of $\U_D(\rond{A})$ is decidable.
\end{theo}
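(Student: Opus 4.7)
The plan is to exhibit $\U_D(\rond{A})_*$, the concurrent unfolding together with its reachability relation, as the rewriting graph of a recognizable trace rewriting system with recognizable (hence level-regular) contexts. Theorem \ref{FOConfig} then makes $\U_D(\rond{A})_*$ word-automatic, so its FO theory, which is by definition the $\FOAccs$ theory of $\U_D(\rond{A})$, is decidable via the corollary that follows it.

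The conceptual core is the following characterization: for traces $[u],[v]$ both readable from $i$ in $\rond{A}$ (equivalently, both vertices of $\U_D(\rond{A})$),
\[ [u]\xrightarrow[\U_D(\rond{A})]{*}[v] \iff [u]\sqsubseteq[v]. \]
The $\Sigma$-labelled edges strictly extend the trace and the $f$-edges are loops, so the forward direction is immediate. Conversely, if $v=uw$ and $w=a_1\cdots a_k$ is any representative, then determinism and $D$-concurrency of $\rond{A}$ ensure that reading $u$ lands at a unique state independent of the chosen representative, and since $v$ is readable from $i$, every prefix $ua_1\cdots a_j$ is readable too, yielding a $\Sigma$-labelled path $[u]\to[ua_1]\to\cdots\to[v]$ in $\U_D(\rond{A})$.

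From this, the RTL system assembles cleanly. For each state $p$ of $\rond{A}$, the trace languages
\[ \rond{U}_p := \{[u]\mid i\xrightarrow[\rond{A}]{u}p\},\qquad \rond{W}_p := \{[w]\mid p\xrightarrow[\rond{A}]{w}\} \]
are recognizable by Proposition \ref{Recreg} and Example \ref{automateresidu}: their unions are accepted by the finite $D$-concurrent automata obtained from $\rond{A}$ by keeping only $\{p\}$ as final state, respectively by taking $p$ as initial state and marking every state final. Similarly $\rond{U}_a := \{[u]\mid \exists p\ i\xrightarrow[\rond{A}]{u}p\xrightarrow[\rond{A}]{a}\}$ is recognizable. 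The system $\Rrond$ then consists of the rules $\rond{U}_a\cdot([\varepsilon]\xrightarrow{a}[a])$ for each $a\in\Sigma$, the rule $[L(\rond{A})]\cdot([\varepsilon]\xrightarrow{f}[\varepsilon])$ for the final-state loops, and one reachability rule $\rond{U}_p\cdot([\varepsilon]\xrightarrow{*}\rond{W}_p)$ per state $p$. The first two families reproduce the $\Sigma$- and $f$-edges of $\U_D(\rond{A})$; the third rule produces exactly the pairs $([u],[uw])$ with $u$ reaching some state $p$ and $w$ executable from $p$, which by the characterization above are precisely the reachability pairs as $p$ ranges over all states. Hence $\graph_\Rrond=\U_D(\rond{A})_*$, and Theorem \ref{FOConfig} finishes the job.

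The only genuinely delicate step is the reachability characterization, where the hard direction uses both determinism and $D$-concurrency of $\rond{A}$ in an essential way; the rest is a routine packaging of residual and accessibility languages of $\rond{A}$ into the RTL rule format. The $D$-concurrency hypothesis is also what guarantees that $\rond{U}_p$, $\rond{W}_p$ and $\rond{U}_a$ are closed under $\equiv_D$, so that they define genuine trace languages to which Proposition \ref{Recreg} applies.
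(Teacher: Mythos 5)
Your proposal is correct and follows essentially the same route as the paper: both exhibit $\U_D(\rond{A})_*$ as the rewriting graph of the RTL with rules $[L(G,i,Q_a)]([\varepsilon]\xrightarrow{a}[a])$, $[L(G,i,F)]([\varepsilon]\xrightarrow{f}[\varepsilon])$ and $[L(G,i,q)]([\varepsilon]\xrightarrow{*}[L(G,q,Q)])$, and then invoke Theorem \ref{FOConfig}. You merely make explicit two facts the paper leaves implicit --- that reachability in $\U_D(\rond{A})$ coincides with the prefix order on readable traces, and that $D$-concurrency ensures the languages involved are $\equiv_D$-closed and hence recognizable trace languages --- which is a welcome addition rather than a deviation.
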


We do not know if, in general, the $D$-unfolding preserves $\FOAccs$ decidability.

\begin{proof}[Proof of Theorem \ref{FOReach}]

Consider  the $\Sigma\ \dot \cup\ \{*\}$-automaton
$$\U_D(\rond{A})_*:=\U_D(\rond{A})\cup\{[u]\xrightarrow{*} [uv]\mid u,v\in \Sigma^*,i\xrightarrow[\rond{A}]{uv}\}$$
It is the rewriting graph of the following recognizable trace rewriting system:

\smallskip

   $\left\{
     \begin{array}{l}
          [L(G,i,Q_a)]([\varepsilon]\xrightarrow{a} [a]) \ \  a\in \Sigma\ \text{ and }\ Q_a=\{q\in Q\mid q\xrightarrow[\rond{A}]{a}\}\\
   
   [L(G,i,F)]([\varepsilon]\xrightarrow{f} [\varepsilon])\\
   
       [L(G,i,q)]([\varepsilon]\xrightarrow{*} [L(G,q,Q)]) \ \ q\in Q.\\
       
     \end{array} \right.$

\end{proof}

\begin{figure}
\centering
\scalebox{0.7}{\includegraphics{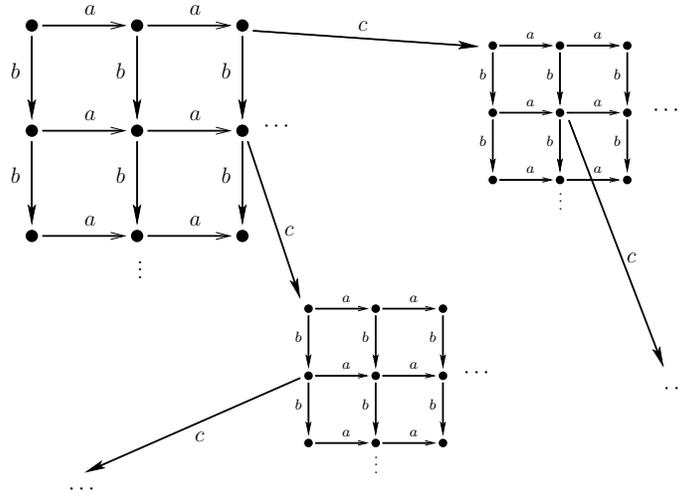}}
\caption{The infinite grid tree}
\label{infinitegridtree}
\end{figure}

\begin{remarque}Given a $\Sigma$-graph $G$, the $\FO$ theory of the graph $G\cup\{p\xrightarrow{\rond{L}}q\mid p\xrightarrow{u}q, [u]\in \rond{L},\rond{L}\in \REC(M(\Sigma,D))\}$ is refered to as the first-order theory with recognizable reachability predicates ($\FO[\REC]$) of $G$. We can strengthen the last theorem and show that: if $\rond{A}$ is a finite $D$-concurrent automaton, then $\U_D(\rond{A})$ has a decidable $\FO[\REC]$ theory. Indeed, observe that each sentence in $\FO[\REC]$ logic contains  a finite number of atomic formula $x\xrightarrow{\rond{L}_1}y$,\dots, $x\xrightarrow{\rond{L}_n}y$ ($n\geqslant 1$). Then $\U_D(\rond{A})\cup\{p\xrightarrow{\rond{L}_j}q\mid p\xrightarrow{u}q, [u]\in \rond{L}_j,j\in\{1,\dots ,n\}\}$ is the rewriting graph of the following RTL:
  
\begin{center}
   $\left\{
     \begin{array}{l}
          [L(G,i,Q_a)]([\varepsilon]\xrightarrow{a} [a]) \ \  a\in \Sigma\ \text{ and }\ Q_a=\{q\in Q\mid q\xrightarrow[\rond{A}]{a}\}\\
   
   [L(G,i,F)]([\varepsilon]\xrightarrow{f} [\varepsilon])\\
   
       [L(G,i,q)]([\varepsilon]\xrightarrow{\rond{L}_j} [L(G,q,Q)]\cap \rond{L}_j) \ \ q\in Q \ \ j\in \{1,\dots,n\}.\\
       
     \end{array} \right.$
 \end{center}
 \end{remarque}
 
We have deduced the $\FO[\REC]$ theory decidability of the Cayley graph of a trace monoid from the $\FO$ decidability of RTL graphs. The following remark shows the inverse reduction.
 \begin{remarque}\label{forec}
Lastly, note that any rewriting graph of a recognizable trace rewriting system (with recognizable contexts) on some trace monoid $M(\Sigma,D)$ is a $\FO[\REC]$ interpretation of the Cayley graph of this trace monoid. Indeed, observe that the neutral element is $\FO$-definable: $\neutral(x)=\forall t \bigwedge_{a\in\Sigma}\lnot(t\xrightarrow{a}x)$. Then for each rule of the form $\rond{U}\cdot(\rond{V}\xrightarrow{}\rond{W})$ consider the formula: $\phi(x,y)=\exists i\exists z(\neutral(i)\ \land\ i\xrightarrow{\rond{U}}z \land z\xrightarrow{\rond{V}}x \land z\xrightarrow{\rond{W}}y)$.
\end{remarque}

\subsection{Regular trace event structure} In \cite{DBLP:conf/lics/Madhusudan03}, Madhusudan proves that the $\FO$ theory of a regular trace event structure is decidable. For this, he shows that the vertex set and the relations of such a graph can be encoded by a recognizable trace language on a judicious dependence alphabet. Note that, due to the level-regular contexts, this technique does not allow to prove that the $\FO$ theory of RTL graphs is decidable.

A trace $t=[a_1\cdots a_n]\in M(\Sigma,D)$ is \emph{prime} if the set $\{1,\dots,n\}$, partially ordered by the relation $E$ defined by $iEj$ if and only if $i<j$ and $a_iDa_j$, has exactly one maximal element.
 
 Let $\rond{L}\subseteq M(\Sigma,D)$ be a trace language. Denote by $\premiere(\rond{L})$ the set of prime traces in $\rond{L}$.

\begin{df}
  The event structure defined by $\rond{L}$, $\StructEv_{\rond{L}}$, is the  $\{\leqslant,\sharp,(\lambda_a)_{a\in\Sigma} \}$-graph whose vertex set is $\premiere(\rond{L})$ defined  by 
\begin{itemize}
\item $t\xrightarrow{\leqslant} t'$ : $t\sqsubseteq t'$
\item $t\xrightarrow{\sharp} t'$ : $\forall t''\in \premiere(\rond{L})(t\not\sqsubseteq t'' \lor t'\not\sqsubseteq t''))\}$ 
\item $t\xrightarrow{\lambda_{a}} t$ : $\text{the maximal element of }t \text{ is }a$.
\end{itemize}
\end{df}

 \begin{theo}[\cite{DBLP:conf/lics/Madhusudan03}]If $\rond{L}\in \REC(M(\Sigma,D))$, then $\StructEv_{\rond{L}}$ has a decidable $\FO$ theory.
  \end{theo}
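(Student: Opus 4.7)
My strategy is to prove the stronger statement that $\StructEv_\rond{L}$ is word-automatic under the Foata normal form encoding of its vertices, following the pattern of Theorem \ref{FOConfig}. The $\FO$ decidability then follows from Hodgson's theorem, exactly as in the corollary after Theorem \ref{FOConfig}.

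\emph{Vertices and labels.} The set $\premiere(M(\Sigma,D))$ of all prime traces is a classical recognizable trace language, so by Proposition \ref{Boolealgebre} the set $\premiere(\rond{L}) = \rond{L}\cap\premiere(M(\Sigma,D))$ is recognizable. Proposition \ref{Recreg} applied to $\pi = \Pi_\Foata$ then gives that $\Pi_\Foata^{-1}(\premiere(\rond{L})) \subseteq \Foata$ is regular. The same Boolean-algebra argument handles the subset of primes whose unique maximum carries a prescribed letter $a$, yielding regularity of the diagonal encoding of each label relation $\lambda_a$.

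\emph{Prefix.} For $\leqslant$, having $t\sqsubseteq t'$ means $t' = t\cdot s$ for some $s\in M(\Sigma,D)$, and Lemma \ref{ConcatEtFoata} says $\foata{t'}\in\foata{t}\parallel s$. Imitating the automaton construction in the proof of Theorem \ref{FOConfig} (with three component automata recognizing, respectively, the Foata encodings of $\premiere(\rond{L})$, the trivial language encoding $[\varepsilon]$, and the $I_D$-encodings of arbitrary traces) produces a synchronous automaton accepting $\{\foata{t}\otimes\foata{t'} : t\sqsubseteq t',\ t, t' \in \premiere(\rond{L})\}$.

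\emph{Conflict.} This is the main difficulty. Two primes $t, t'$ satisfy $t\sharp t'$ iff they admit no common upper bound in $\premiere(\rond{L})$. My plan splits into two automata-theoretic tasks: (i) describe the compatibility of $t, t'$ in $M(\Sigma,D)$ and the Foata form of their least upper bound $t\vee t'$ (when it exists) as a synchronous transducer on $\foata{t}\otimes\foata{t'}$, again invoking Lemma \ref{ConcatEtFoata}; and (ii) exploit Proposition \ref{finituderesidus}: since $\rond{L}$ is recognizable, its residuals form a finite family, so one can precompute which residual classes admit a prime trace, and hard-code this finite information into the automaton. Composing (i) and (ii) gives a regular synchronized language encoding the conflict relation. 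The delicate step is (i)---giving an explicit Foata-level transducer for the trace LUB operation and characterising when the LUB fails to exist---together with the lifting of ``admits a prime extension'' to the finitely many residual classes of $\rond{L}$.
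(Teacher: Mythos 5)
Your route is genuinely different from the paper's. The paper does not build an automatic presentation of $\StructEv_{\rond{L}}$ directly: it observes that the residual automaton $\residual{\rond{L}}{\Sigma}$ is finite (Proposition \ref{finituderesidus}), so by Theorem \ref{FOReach} the graph $\U_D(\residual{\rond{L}}{\Sigma})_*$ is an RTL graph with decidable $\FO$ theory, and then obtains $\StructEv_{\rond{L}}$ as a first-order interpretation of that graph --- the key remark being that a vertex $t$ is prime if and only if it is not the successor of two distinct vertices, so that $\premiere(\rond{L})$, the order $\sqsubseteq$ (which is just the reachability label $*$), the conflict relation and the labels are all $\FO$-definable there. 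Your plan instead re-runs the machinery of Theorem \ref{FOConfig} on the event structure itself; the vertex set, the labels and the prefix relation do go through essentially as you describe (take $\rond{U}=\premiere(\rond{L})$, $\rond{V}=\{[\varepsilon]\}$, $\rond{W}=M(\Sigma,D)$, and intersect the target component with the regular language $\Pi_{\Foata}^{-1}(\premiere(\rond{L}))$).

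The one genuine gap is your treatment of $\sharp$. Step (i) --- a synchronous Foata-level transducer for the least upper bound together with a characterisation of when it exists --- is precisely the nontrivial part, and you only announce it: nothing in the paper's toolbox hands it to you (Lemma \ref{ConcatEtFoata} concerns right concatenation, not joins), and proving, say, that the levelwise union of two Foata words is the Foata word of the join of compatible traces would require a separate argument. Fortunately the whole step is unnecessary: $t\mathbin{\sharp}t'$ is first-order definable from $\leqslant$ over the domain $\premiere(\rond{L})$, namely $\lnot\exists t''\,(t\leqslant t''\land t'\leqslant t'')$, and first-order definable relations over a word-automatic presentation are again synchronized-regular. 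So once the domain and $\leqslant$ are automatic, $\sharp$ is automatic for free and decidability of the $\FO$ theory follows. This shortcut is, in essence, the same observation that powers the paper's interpretation argument; invoking it turns your proposal into a correct, if more laborious, proof.
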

  
\begin{proof}
    A trace $t\in V_{\U_D(\residual{\rond{L}}{\Sigma})_*}$ is prime if and only if $t$ is not successor of two distinct vertices of $\U_D(\residual{\rond{L}}{\Sigma})_*$. Since this last property is $\FO$ expressible, the event structure $\StructEv_{\rond{L}}$ can be obtained by a  $\FO$ interpretation of $\U_D(\residual{\rond{L}}{\Sigma})_*$, that has a decidable $\FO$ theory.
\end{proof}

\section{Graph tree}\label{graphtree}
In this section, we consider ground term rewriting graphs. These graphs have a decidable $\FOAccs$ theory \cite{DBLP:conf/lics/DauchetT90}. We define a notion of graph tree and we prove that if a graph tree is a ground term rewriting graph (GTR graph), then it is finitely decomposable by size. A direct consequence is that the infinite grid tree, defined above as the concurrent unfolding of a concurrent automaton (Exemple \ref{infinitegridtree}), is not a GTR graph, although it has a $\FOAccs$ theory decidable.

\subsection{Ground Term Rewriting graphs (GTR graphs)}
A \emph{position} is an element of $\N^*$, the set of finite words over $\N$. Denote by $\sqsubseteq$ the prefix ordering over $\N^*$. Let $F$ be a ranked alphabet (each symbol in $F$ has an arity in $\N$). A \emph{term $t$ on $F$} is a partial function $t:\N^*\longrightarrow{F}$ whose domain, $\pos(t)$, has the following properties: \begin{itemize}
\item $\pos(t)\neq\varnothing$
  \item $\pos(t)$ is prefix closed  \item $\forall u\in \pos(t)$, if the arity of $t(u)$ is $n$ ($n\geqslant 0$), then $\{j\mid uj\in\pos(t)\}=\{1,\dots,n\}$.
\end{itemize}
The \emph{size} $\taille{t}$ of a term $t$ is the number of its nodes. The \emph{subterm} of $t$ at position $u$, denoted $\ssterme{t}{u}$, is the term on $F$ defined by: \begin{itemize}
  \item $\pos(\ssterme{t}{u})=\{v\in \N^*\mid uv\in\pos(t)\}$
  \item $\forall v\in \pos(\ssterme{t}{u})$, $(\ssterme{t}{u})(v)=t(uv)$.
\end{itemize}
If $u\in\pos(t)$ and $s$ is a term, then \emph{$t\sub{u}{s}$}, the term obtained from $t$ by replacing  the subterm $\ssterme{t}{u}$ by $s$, is defined by : $$t\sub{u}{s}(v)=\left\{\begin{aligned}
&s(w) \text{ if } v=uw \text{ and } w\in \pos(s)\\
&t(v) \text{ if }v\in\pos(t) \text{ and }u\not \sqsubseteq v\end{aligned}\right.$$
If $t$ is a term on $F$ and $u\in\pos(t)$, then the \emph{context of $t$ at the position $u$} is the term $t\sub{u}{x}$ on $F\ \dot\cup\ \{x\}$, where $x$ is a constant \textit{i.e} the arity of $x$ is 0.

A \emph{context} $C$ on $F$ is a term on $F\ \dot\cup\ \{x\}$, $x$ constant, such that there exists a unique position $u_C\in\pos(C)$ for which $C(u_C)=x$. If $t$ is a term on $F$, then the term $C[t]$ on $F$ is defined by $C[t]:=C\sub{u_C}{t}$. The size $\taille{C}$ of a context $C$ on $F$ is the number of its nodes minus 1.

A \emph{ground term rewriting system} $\GTrs{R}$ is a $4$-tuple $\GTrs{R}=(F,\Sigma,R,i)$ where:\begin{itemize}
  \item $F$ is a ranked alphabet  \item $\Sigma$ is a label alphabet
  \item $R:={\displaystyle \bigcup_{a\in\Sigma}} R_a$, where for each $a\in \Sigma$, $R_a$ is a finite set of rules of the form $s\xrightarrow{a}s'$ with $s$ and $s'$ distinct terms on $F$
  \item $i$ is an initial $F$-term.
\end{itemize}

We write: \begin{itemize}
\item $t\xrightarrow[\GTrs{R}]{a}t'$ if  there exists a position $p\in \pos(t)$ and a rule $s\xrightarrow{a}s' \in R_a$ such that $\ssterme{t}{p}=s$ and $t'=t\sub{p}{s'}$
\item $t\xrightarrow[\GTrs{R}]{}t'$ when there exists $a\in \Sigma$ such that $t\xrightarrow[\GTrs{R}]{a}t'$
\item $\xrightarrow[\GTrs{R}]{*}$ for the reflexive and transitive closure under composition of $\xrightarrow[\GTrs{R}]{}$.
\end{itemize}
The \emph{configuration graph} $\graph_{\GTrs{R}}$ of $\GTrs{R}$ is the $\Sigma$-graph defined by $$\graph_{\GTrs{R}}:=\{t\xrightarrow[\GTrs{R}]{a}t'\mid i\xrightarrow[\GTrs{R}]{*}t,a\in \Sigma\}$$
A graph is called Ground Term Rewriting graph (GTR graph) if it is isomorphic to the configuration graph of a ground term rewriting system. 
\begin{remarque}\label{loop}GTR graphs have no loop since each rule in the rewriting system has distinct left hand side and right hand side.
\end{remarque}

\begin{exemple}
The infinite grid is a GTR graph (see Figure \ref{grilleinfinieGTRS}).
\end{exemple}

\begin{figure}
\centering
   \scalebox{0.7}{\includegraphics{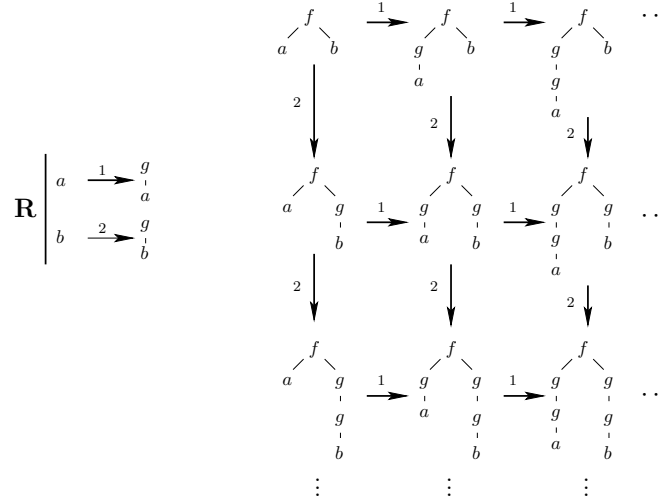}}
\caption{The infinite grid is a GTR graph}
\label{grilleinfinieGTRS}
\end{figure}

In \cite{DBLP:conf/lics/DauchetT90}, Dauchet and Tison prove that a GTR graph with the reachability relation is tree-automatic. Thus:

\begin{theo}[ \cite{DBLP:conf/lics/DauchetT90}]
GTR graphs have a decidable $\FOAccs$ theory.
\end{theo}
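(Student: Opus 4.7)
The plan is to prove the theorem by showing that the graph $(\graph_{\GTrs{R}})_*$ (the configuration graph together with the reachability relation) is tree-automatic, and then invoke the standard result that tree-automatic structures have decidable first-order theory (analogous to the word-automatic case used as Hodgson's corollary to Theorem~\ref{FOConfig}).

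First, I would set up the tree encoding: the vertex set $V_{\graph_{\GTrs{R}}}=\{t\mid i\xrightarrow[\GTrs{R}]{*}t\}$ of reachable terms must be shown to be a regular tree language. This uses the classical fact from term rewriting that for a ground rewriting system $\GTrs{R}$, the set $\{t\mid i\xrightarrow[\GTrs{R}]{*}t\}$ of descendants of a fixed term is a regular tree language, obtained by a saturation procedure: start with the tree automaton recognizing $\{i\}$, and repeatedly add, for every rule $s\xrightarrow{a}s'\in R_a$, transitions that accept $s'$ at every state that accepts $s$, until saturation. Finiteness of the number of rules and finiteness of the left/right-hand sides ensures termination.

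Second, for each label $a\in\Sigma$, I would show that the binary relation $\{(t,t')\mid t\xrightarrow[\GTrs{R}]{a}t'\}$ is recognized by a synchronized tree transducer. This step is direct: one guesses a position $p$, checks that both input trees agree outside the subtree at $p$, and verifies that the $p$-subtrees form a pair $(s,s')$ with $s\xrightarrow{a}s'\in R_a$; since $R_a$ is finite and $s,s'$ have bounded size, this is realised by a finite synchronized transducer on the convolution of the two trees (padding with a dummy symbol when the domains differ). The main technical step, and the real obstacle, is to prove that the global reachability relation $\xrightarrow[\GTrs{R}]{*}$ is also recognized by a synchronized tree transducer. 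This is exactly the content of Dauchet--Tison: one constructs, by saturation, a transducer whose states encode the relevant substitutions of $\{i\}$-descendants into contexts, using the fact that ground rewriting is purely local (no variables, no matching across positions) so that independent rewrites at incomparable positions can be simulated in parallel by the transducer, while rewrites along a single branch can be composed by iterating a bounded set of rules.

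Once the vertex set is a regular tree language and all relations (the edges labelled by $a\in\Sigma$ and the reachability edge labelled by $*$) are tree-automatic, $(\graph_{\GTrs{R}})_*$ is a tree-automatic structure. The class of tree-automatic relations is a Boolean algebra and is closed under projection (the tree analogue of Lemma~\ref{projection}), so first-order formulae over $(\graph_{\GTrs{R}})_*$ can be effectively translated into tree automata, and satisfaction reduces to emptiness of a tree automaton, which is decidable. Hence the $\FOAccs$ theory of $\graph_{\GTrs{R}}$ is decidable, as claimed.
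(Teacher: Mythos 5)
Your proposal follows exactly the route the paper takes: it cites Dauchet--Tison for the fact that a GTR graph together with its reachability relation is tree-automatic, and then concludes decidability of the $\FOAccs$ theory from the general decidability of first-order logic over tree-automatic structures. Your sketch merely expands the content of that citation (regularity of the descendant set, synchronized transducers for one-step rewriting, and the saturation argument for $\xrightarrow{*}$, which you correctly identify as the real technical core deferred to Dauchet--Tison), so it is the same argument as the paper's.
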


\subsection{Finite decomposition of a graph}
Let us start by recalling the definition of the frontier of a subgraph.
\begin{df}\label{frontiere}
Let $G$ be a graph and $H\subseteq G$ a subgraph of $G$. The frontier of $H$ (in $G$) is $\Fr(H)=V_H \cap V_{G-H}$.
\end{df}

The frontier of $H$ is the set of $H$-vertices that are incident to an edge in $G-H$.

Let $\graph_{\GTrs{R}}$ be a GTR graph. For each $n\geqslant0$, $$\GRrondn:=\{s\xrightarrow{e}t\in \graph_{\GTrs{R}}\mid \taille{s}<n \text{ or } \taille{t}<n\}$$

According to Definition \ref{frontiere}, the frontier of $\graph_{\GTrs{R}}-\GRrondn$ is $\Fr({\graph_{\GTrs{R}}}-\GRrondn)=V_{{\graph_{\GTrs{R}}}-\GRrondn}\cap V_{\GRrondn}$. And the frontier of $K$, a connected component of ${\graph_{\GTrs{R}}}-\GRrondn$, is $\Fr(K)=\Fr({\graph_{\GTrs{R}}}-\GRrondn)\cap V_K$. The frontier of $K$ is formed by the $K$-vertices incident to an edge in $\GRrondn$.

\medskip
The graph $\graph_{\GTrs{R}}$ is \emph{finitely decomposable by size} if $$\dec:=\{(K,\Fr(K))\mid K \text{ connected component of } \graph_{\GTrs{R}}-G_n, n\geqslant 0\}$$ has finite index, for the isomorphism relation.

\begin{theo}[\cite{DBLP:conf/birthday/Caucal08}]\label{finitelydecomposable}
If a countable graph is finitely decomposable by size, then it is at the first level of the pushdown hierarchy. In particular, it has a decidable MSO theory.
\end{theo}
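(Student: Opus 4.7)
The plan is to exhibit the graph as a monadic second-order interpretation of a regular tree, since the graphs at the first level of the pushdown hierarchy are exactly the MSO-interpretations of regular trees (equivalently, the prefix-recognizable graphs, or the equational graphs in Courcelle's sense).

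First, I would exploit the hypothesis to obtain a finite set $\mathcal{T}$ of isomorphism classes of pairs $(K,\Fr(K))$. I would then organise the decomposition hierarchically. Fix some enumeration of the vertices; starting from the finite subgraph $G_{n_0}$ (for some small $n_0$) I would look at the connected components of $\graph_\GTrs{R}-G_{n_0}$, which by Definition~\ref{frontiere} each carry a finite frontier made of vertices already sitting in $G_{n_0}$. Each such component $K$ is itself a countable graph on which the construction can be iterated: the next level of the decomposition inside $K$ is the family of connected components of $K-G_{n_1}$ (where $n_1$ is the next threshold at which the frontier of $K$ grows). Carrying out this process yields a rooted tree $T$ whose nodes are labelled by elements of $\mathcal{T}$ and whose edges record the inclusion relation ``connected component of the next stratum''.

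Second, I would check that $T$ is a regular tree, that is, it has finitely many non-isomorphic subtrees. This is where the hypothesis does the essential work: the set $\dec$ has finite index for the isomorphism relation, so the combinatorial type of a node in $T$ together with the continuation below it is determined by the isomorphism class in $\mathcal{T}$ of the corresponding pair $(K,\Fr(K))$. Hence the map sending a node of $T$ to its subtree factors through $\mathcal{T}$, which is finite. Since the unfoldings of finite graphs are exactly the regular trees, $T$ is the unfolding of some finite graph.

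Third, I would define an MSO-interpretation reconstructing $\graph_\GTrs{R}$ from $T$. Each node of $T$ carries, as its label in $\mathcal{T}$, a fixed finite piece of the graph together with a distinguished finite set of frontier vertices; the interpretation declares the vertices of $\graph_\GTrs{R}$ to be the non-frontier vertices of every node (each vertex appears in a unique node once we agree to attribute frontier vertices to the parent), and it declares an $a$-edge whenever this is so within a single node or whenever two adjacent nodes agree on the corresponding frontier vertex. All these conditions are first-order in $T$ enriched with the finite node-labelling, hence \emph{a fortiori} MSO. Combining this with the previous step gives $\graph_\GTrs{R}$ as an MSO-interpretation of the unfolding of a finite graph, which places it at the first level of the pushdown hierarchy; decidability of MSO then follows from the fact that this entire level has decidable MSO theory.

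The main obstacle is the third step: one must ensure that frontier vertices are consistently identified between a node and its parent without duplication, and that the local labelling actually suffices to recover all edges of $\graph_\GTrs{R}$. Concretely, an edge of $\graph_\GTrs{R}$ whose endpoints both lie in the frontier of some component has to be attributed unambiguously to the parent node; handling this bookkeeping inside an MSO formula is the delicate bit, but is routine once one fixes the convention that every vertex is ``owned'' by the shallowest node in $T$ containing it.
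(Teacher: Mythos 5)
The paper does not actually prove this statement: it is quoted verbatim from Caucal \cite{DBLP:conf/birthday/Caucal08}, so there is no internal proof to compare against. Measured against the known argument, your overall architecture is the expected one: convert the finitely many isomorphism types of the pairs $(K,\Fr(K))$ into a finite generating device for the graph --- a deterministic graph grammar in Caucal's formulation, a regular tree together with an MSO-interpretation in yours; these are interchangeable presentations of the first level of the pushdown hierarchy.

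The genuine gap is in your second step, and it sits exactly where all the work of the theorem lives. You assert that ``the combinatorial type of a node in $T$ together with the continuation below it is determined by the isomorphism class of $(K,\Fr(K))$''. But an isomorphism of pairs $(K,\Fr(K))\cong(K',\Fr(K'))$ is merely a graph isomorphism matching the two frontiers; it has no reason to be compatible with the size stratification, i.e.\ it need not carry $K\cap G_m$ onto $K'\cap G_{m'}$ for any $m'$. Hence the refined decomposition of $K$ into components of $G-G_m$ ($m>n$) is not transported onto the refined decomposition of $K'$, and the tree $T$ you build is not obviously regular: two nodes with the same label may a priori carry non-isomorphic subtrees. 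The same difficulty resurfaces in your third step, where you treat a node's label as ``a fixed finite piece of the graph'': the pair $(K,\Fr(K))$ is in general an \emph{infinite} graph, and the finite object you actually need --- the slice of $K$ lying between two consecutive thresholds, together with the attachment of the deeper components to it --- is again not determined up to isomorphism by the class of $(K,\Fr(K))$ without further argument. Closing this gap is the substance of the cited proof: one fixes a representative of each class, extracts one rewriting rule per class from that representative, and then shows by an explicit stepwise construction of a limit isomorphism that the graph generated by the resulting grammar is isomorphic to the original graph, rather than reading regularity of the decomposition tree directly off the finite index of $\dec$. As written, your proposal takes precisely that step for granted.
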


\subsection{Graph tree and finite decomposition}

\begin{df}Let $G$ be a $\Sigma$-graph and $p_0\in V_G$. Given a new symbol $c\notin \Sigma$, the $G$-tree from $p_0$ is the $\Sigma\ \dot\cup\ \{c\}$-graph, $\arbre{G}{p_0}$, defined by $$\arbre{G}{p_0}:=\{up\xrightarrow{a}uq\mid u\in V_G^*, p\xrightarrow[G]{a}q\}\cup\{u\xrightarrow{c}up_0\mid u\in V_G^*\}$$
\end{df}

\begin{exemple}
See Figure \ref{arbredemidroite} for the semi-line tree.
\end{exemple}

\begin{figure}
\centering
   \scalebox{0.6}{\includegraphics{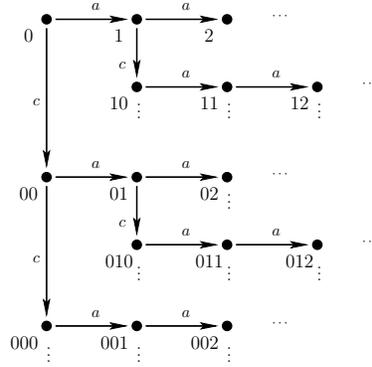}}
\caption{The semi-line tree}
\label{arbredemidroite}
\end{figure}

\begin{remarque}\label{cdeterminisme}
The graph $\arbre{G}{p_0}$ is $c$-deterministic: $$(v\xrightarrow[\arbre{G}{p_0}]{c}v_1\text{ and }v\xrightarrow[\arbre{G}{p_0}]{c}v_2)\Longrightarrow v_1=v_2$$\end{remarque}

\begin{remarque}
The graph $\arbre{G}{p_0}$ is a tree if and only if $G$ is a tree.\end{remarque}

\begin{remarque}Let $(\Sigma,D)$ be a dependence alphabet, $G$ a finite $D$-concurrent $\Sigma$-graph and $p_0\in V_G$. The $\Sigma\  \dot \cup\  \{c\}$-graph defined by $G\cup \{p\xrightarrow{c}{p_0}\mid p\in V_G\}$ is $D_c$-concurrent, with $D_c=D\ \cup\ (\Sigma \cup\{c\})\times(\Sigma \cup\{c\})$. Its $D_c$-unfolding from $p_0$ is $\arbre{\U_D(G,p_0)}{p_0}$. 

\end{remarque}

\begin{theo}\label{theoprincipal}
If $\arbre{G}{p_0}$ is a GTR graph, then $\arbre{G}{p_0}$ is finitely decomposable by size.\end{theo}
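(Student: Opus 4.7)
I would identify $\arbre{G}{p_0}$ with $\graph_{\GTrs{R}}$ for some GTR system $\GTrs{R}$, via an isomorphism $\phi$, and let $L$ bound the number of nodes on any side of any rule in $\GTrs{R}$. Because adjacent vertices differ in term size by at most $L$, and because there are only finitely many $F$-terms of any bounded size, the set $V_{\GRrondn}$ is finite: indeed $V_{\GRrondn} \subseteq \{v : \taille{\phi(v)} < n + L\}$. In particular, the frontier of any connected component of $\graph_{\GTrs{R}} - \GRrondn$ lies in this finite set, and vertices with $\taille{\phi(v)} < n$ become isolated in $\graph_{\GTrs{R}} - \GRrondn$.

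The central tool will be the self-similarity of $\arbre{G}{p_0}$: for each $u \in V_G^*$, the shift $\psi_u : v \mapsto uv$ is an injective graph morphism $\arbre{G}{p_0} \hookrightarrow \arbre{G}{p_0}$ whose image $I_u = u V_G^*$ is isomorphic to $\arbre{G}{p_0}$, and the only edges joining $I_u$ to its complement are incident to the single vertex $u$ (the $\Sigma$-edges at $u$ modifying its last letter, and the incoming $c$-edge when $u$ ends in $p_0$). Thus, whenever $u \in V_{\GRrondn}$ satisfies $I_u \cap V_{\GRrondn} = \{u\}$, the subgraph $I_u \setminus \{u\}$ is a full connected component of $\graph_{\GTrs{R}} - \GRrondn$, isomorphic via $\psi_u^{-1}$ to $\arbre{G}{p_0} \setminus \{\varepsilon\}$ with frontier the set of $\arbre{G}{p_0}$-neighbours of $\varepsilon$; call this the \emph{standard} iso type, giving a single contribution to $\dec$.

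The remaining work is to verify that the non-standard components contribute only finitely many further iso types across all $n$. My plan is a pigeonhole argument on the $c$-edge forest of $\arbre{G}{p_0}$: term sizes tend to $+\infty$ along every infinite $c$-path (distinct vertices must have distinct terms, of which only finitely many have any bounded size), and every $c$-step is the application of one of finitely many GTR rules at some position of the current term. Consequently, the sequence of (rule, position-relative-to-root-context) pairs along every infinite $c$-branch becomes eventually periodic, and beyond a uniform depth every image $I_u$ we meet satisfies $I_u \cap V_{\GRrondn} = \{u\}$, producing a standard component. The residual ``pre-periodic'' components are confined within a neighbourhood of bounded radius (in $L$ and the periodicity lag) around $V_{\GRrondn}$, and hence give only finitely many iso types. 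The main obstacle I foresee is making this pigeonhole argument uniform over all branches of the $c$-forest simultaneously; I expect to handle this by noting that the rewriting triggered by a $c$-step depends only on a bounded ``top context'' of the current term, a finite-state invariant that controls all branches at once.
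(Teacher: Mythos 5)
Your setup is fine as far as it goes (finiteness of $V_{\GRrondn}$, self-similarity of $\arbre{G}{p_0}$, the fact that every edge leaving $I_u=uV_G^*$ is incident to $u$), but the two steps that would actually prove the theorem both have genuine gaps. First, the claim that the non-standard components are ``confined within a neighbourhood of bounded radius around $V_{\GRrondn}$, and hence give only finitely many iso types'' is a restatement of the goal, not an argument: $V_{\GRrondn}$ grows without bound as $n$ grows (it contains every term of size between $n$ and $n+\delta$ meeting a crossing edge), so a bounded-radius neighbourhood of it is a family of finite graphs of unbounded cardinality, and nothing you have said bounds their isomorphism types --- that is precisely the hard part of the theorem. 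Second, the eventual periodicity of the (rule, position) pairs along $c$-branches is unsupported: a $c$-step rewrites $t$ at some position $p$ with $\ssterme{t}{p}$ a left-hand side; there are finitely many rules but infinitely many candidate positions, and the position is not controlled by any ``bounded top context'' of $t$ --- a priori the $c$-rewriting can occur arbitrarily deep in the term, and nothing forces the position sequence along a branch to be periodic. There are also local inconsistencies: if $I_u\cap V_{\GRrondn}=\{u\}$ then the edge $u\xrightarrow{c}up_0$ is not in $\GRrondn$, so $u$ and $up_0$ lie in the same component and $I_u\setminus\{u\}$ cannot be a component of $\graph_{\GTrs{R}}-\GRrondn$; and there is no reason the size-$<n$ cut should separate the graph along whole subtrees $I_u$ in the first place, since vertices in a single copy of $G$ at one level may have quite different term sizes.

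The paper's proof rests on a structural fact that your plan never touches and that you would still need in some form. Because every vertex of $\arbre{G}{p_0}$ carries an outgoing $c$-edge, $c$ is deterministic, and $c$ commutes with no $\Sigma$-label in the tree, every term $t$ of $\graph_{\GTrs{R}}$ has a unique \emph{minimal} position at which it is incident to a rewriting (Lemma \ref{reecincomp}: two incomparable rewriting positions would produce a commuting square of distinct vertices, which the $c$-edges forbid). From this, for each component $K$ one takes a term of minimal size, cuts it at the shallowest position where the subterm has size at most $M$ (the maximal size of a rule side), and shows by connectedness of $K$ that \emph{every} term of $K$ carries the same context $C_K$ above that position; frontier terms then have subterms of size $<M+\delta$ there, so $\{s\mid C_K[s]\in\Fr(K)\}$ ranges over subsets of one fixed finite set and determines the isomorphism type of $(K,\Fr(K))$. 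Some such common-context lemma, exploiting the $c$-edges, is what controls the components you call non-standard; the periodicity heuristic does not substitute for it.
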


The MSO theory of the infinite grid is undecidable. The same holds for the infinite grid tree. By combining Theorem \ref{theoprincipal} and Theorem \ref{finitelydecomposable}, we deduce the corollary below.

\begin{corollaire}
The infinite grid tree is not a GTR graph.
\end{corollaire}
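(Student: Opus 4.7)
The plan is to argue by contradiction, chaining together the two theorems that immediately precede the statement. Assume toward a contradiction that the infinite grid tree $T$ is a GTR graph. By the remark following the definition of $\arbre{G}{p_0}$, the infinite grid tree is of the form $\arbre{G}{p_0}$ with $G$ the infinite grid, so Theorem~\ref{theoprincipal} applies and yields that $T$ is finitely decomposable by size. Then Theorem~\ref{finitelydecomposable} places $T$ at the first level of the pushdown hierarchy, and therefore endows $T$ with a decidable $\MSO$ theory.

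To close the contradiction, I would invoke the undecidability of the $\MSO$ theory of the infinite grid tree, which the paper has already announced just above the corollary. The justification is straightforward: the infinite grid itself has undecidable $\MSO$ theory (a classical fact; e.g., tiling problems are $\MSO$-expressible on it), and the infinite grid is $\MSO$-interpretable in $T$. More precisely, the root copy of the grid inside $T$ is the set of vertices that are $\{a,b\}$-reachable from $p_0$ without traversing a $c$-edge, which is $\MSO$-definable (the smallest set containing $p_0$ and closed under $a$- and $b$-successors), and the $a$- and $b$-edges restricted to this set are exactly the edges of the infinite grid. Hence $\MSO$ decidability would transfer from $T$ to the grid, a contradiction.

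Combining the two implications: GTR $\Rightarrow$ finitely decomposable by size $\Rightarrow$ first level of the pushdown hierarchy $\Rightarrow$ decidable $\MSO$, while the grid tree has undecidable $\MSO$. Therefore $T$ is not a GTR graph. The only nontrivial step in this plan is the $\MSO$ interpretation of the grid in $T$, but this is routine once one notices that the $c$-labelled backward edges can be excluded by an $\MSO$ formula; everything else is an appeal to results already stated in the paper.
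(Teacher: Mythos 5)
Your proof is correct and follows essentially the same route as the paper: combine Theorem~\ref{theoprincipal} and Theorem~\ref{finitelydecomposable} to get that a GTR grid tree would have a decidable $\MSO$ theory, and contradict this via the undecidability of the $\MSO$ theory of the infinite grid tree. The only difference is that you spell out the $\MSO$ interpretation of the grid in the grid tree, which the paper merely asserts; that added detail is sound.
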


\subsection{Proof of Theorem \ref{theoprincipal}}
\begin{lemme}\label{reecincomp}Let $G$ be a $\Sigma$-graph and $p_0\in V_G$. If there exists a ground term rewriting system $\GTrs{R}=(F,\Sigma,R,i)$ such that  $\arbre{G}{p_0}$ is isomorphic to $\graph_{\GTrs{R}}$, then for every term $t\in V_{\graph_{\GTrs{R}}}$, there exists a smallest position $u_t$ (for the prefix ordering $\sqsubseteq$) at which $t$ is incident to a rewriting in $\graph_{\GTrs{R}}$.
\end{lemme}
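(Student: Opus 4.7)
The plan is to show that the set
$P(t) := \{u \in \pos(t) : t \text{ is incident in } \graph_{\GTrs{R}} \text{ to some edge using position } u\}$
is nonempty and totally ordered by the prefix order $\sqsubseteq$; being a finite subset of $\pos(t)$, it then admits a minimum, which is the required $u_t$. Nonemptiness is immediate: the vertex $v \in V_G^*$ corresponding to $t$ under the isomorphism carries the outgoing $c$-edge $v \xrightarrow{c} v p_0$ by the very definition of $\arbre{G}{p_0}$, which transfers to an outgoing $c$-edge from $t$ in $\graph_{\GTrs{R}}$, realised by applying some $c$-rule at some position of $\pos(t)$.

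For the chain property, I assume for contradiction that two positions $p_1, p_2 \in P(t)$ are incomparable. A short reduction (reversing the orientation of an incoming edge if necessary) yields two outgoing rewritings $t \xrightarrow{a_1} t_1$ at $p_1$ and $t \xrightarrow{a_2} t_2$ at $p_2$. Since $p_1, p_2$ are disjoint positions of $\pos(t)$, the two rewritings commute: setting $t' := t\sub{p_1}{s_1'}\sub{p_2}{s_2'}$ where $s_i \to s_i'$ are the underlying rules, we get $t_1 \xrightarrow{a_2} t'$ at $p_2$ and $t_2 \xrightarrow{a_1} t'$ at $p_1$, and the four terms $t, t_1, t_2, t'$ are pairwise distinct because each rule has distinct left- and right-hand sides. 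Through the isomorphism, this commutative square transfers to four pairwise distinct vertices $v, v_1, v_2, w$ of $\arbre{G}{p_0}$ with edges labelled $a_1, a_2, a_2, a_1 \in \Sigma\ \dot\cup\ \{c\}$.

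A case analysis on the labels then produces the contradiction. If $a_1 = a_2 = c$, then by $c$-determinism (Remark~\ref{cdeterminisme}) $v_1 = v p_0 = v_2$, contradicting distinctness. If exactly one label equals $c$, say $a_1 = c$: writing $v = u p$ with $p \in V_G$, we have $v_1 = v p_0$ and $v_2 = u q$ for some $q$ with $p \xrightarrow{a_2} q$ in $G$; the two paths to $w$ then read $w = u q p_0$ and $w = v q' = u p q'$ for some $q'$ with $p_0 \xrightarrow{a_2} q'$ in $G$, whose equality collapses to $p = q$ and $q' = p_0$, giving $v_2 = v$, again contradicting distinctness. In the remaining case $a_1, a_2 \in \Sigma$, the commutative square can genuinely occur in $\arbre{G}{p_0}$ (whenever $G$ possesses such a square at the last letter of $v$); here I invoke the outgoing $c$-edges at each of $v, v_1, v_2, w$ and apply the previous two cases to the pairs of positions $(p_i, u^c_t)$, where $u^c_t$ denotes the position of the outgoing $c$-edge from $t$, forcing $u^c_t$ to be comparable with both $p_1$ and $p_2$; iteration of this analysis along the infinite and pairwise-distinct $c$-chain $v \xrightarrow{c} v p_0 \xrightarrow{c} v p_0 p_0 \xrightarrow{c} \cdots$, combined with the finiteness of the rewriting rules (which bounds the sizes of rule sides) and the absence of loops in $\graph_{\GTrs{R}}$ (Remark~\ref{loop}), yields the required contradiction.

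The main obstacle is precisely this last subcase: since $\Sigma$-labelled commutative squares can exist in $\arbre{G}{p_0}$ whenever they exist in $G$, the contradiction cannot be drawn from their mere nonexistence and must instead be extracted from the interplay between such a $\Sigma$-diamond and the rigid structure of the $c$-edges of $\arbre{G}{p_0}$.
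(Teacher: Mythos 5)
Your first two cases (both labels equal to $c$; one label $c$ and one in $\Sigma$) are correct and coincide with the paper's argument: a commuting diamond at incomparable, hence disjoint, positions cannot involve a $c$-edge, by $c$-determinism in the first case and because $c$-edges and $\Sigma$-edges do not commute in $\arbre{G}{p_0}$ in the second. (Be a little more careful than ``reversing the orientation of an incoming edge'': the diamond can occur in three distinct orientations depending on whether $t$ is the source or the target of each rewriting, and the $c$-determinism argument has to be applied at the right corner of the square in each orientation; but this is a presentational issue, not a gap.) The genuine gap is your third case, $a_1,a_2\in\Sigma$. There you are trying to prove something strictly stronger than the lemma needs, namely that the set $P(t)$ of incidence positions is \emph{totally ordered}, and the proposed contradiction --- ``iteration along the infinite $c$-chain, combined with the finiteness of the rules and the absence of loops'' --- is not an argument but a placeholder. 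As you yourself note, $\Sigma$-labelled diamonds can genuinely occur in $\arbre{G}{p_0}$, so two incomparable $\Sigma$-labelled incidence positions need not be excluded at all, and this subcase cannot be closed along the lines you sketch.

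The fix is to lower the target, and you already hold every ingredient. You do not need $P(t)$ to be a chain; it suffices that any two incomparable elements of $P(t)$ admit a common lower bound \emph{in} $P(t)$, since for a finite nonempty poset with that property any minimal element is automatically the minimum. The common lower bound is exactly the position $u^c_t$ of the $c$-rewriting whose existence you used for nonemptiness: your first two cases show that $u^c_t$ cannot be incomparable to $p_1$ nor to $p_2$, and since $p_1$ and $p_2$ are themselves incomparable, comparability of $u^c_t$ with both collapses to $u^c_t\sqsubseteq p_1$ and $u^c_t\sqsubseteq p_2$. This is precisely the paper's proof; replacing your third case by this observation makes your argument correct.
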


\begin{proof}[Proof of Lemma \ref{reecincomp}]
It suffices to prove that if there exists two incomparable positions $u'$ and $u''$ at which $t$ is incident to rewritings, then there exists a position $v$, $v\sqsubseteq u'$, $v\sqsubseteq u''$ at which $t$  is incident to a rewriting.

Denote by $e'$ (respectively $e''$) the label of the rewriting $t$ is incident in position $u'$ (respectively $u''$). We are going to show that $c\notin \{e',e''\}$. Since $u'$ and $u''$ are incomparable and because of Remark \ref{loop}, there exists two paths between two distinct vertices of $\graph_{\GTrs{R}}$, labelled by $e'e''$ and $e''e'$, each of them with no loop (see Figure \ref{reecincompdessin}). This is possible in $\arbre{G}{p_0}$ only if $c\notin \{e',e''\}$. Indeed,
\begin{itemize}
\item $\{e',e''\}\subseteq \{c\}$ is impossible because of Remark \ref{cdeterminisme}
\item $e'=c$ and $e''\in\Sigma$ (or the converse $e''=c$ and $e'\in\Sigma$) is impossible because $c$ and $e''$ do not commute in $\arbre{G}{p_0}$.
\end{itemize} 

\begin{figure}
\centering
\psset{arrowsize=5pt}

\psset{xunit=0.95cm,yunit=0.95cm}
\begin{pspicture}(0,0)(5,4)
\pscircle[linewidth=2pt](2,2){1.6}
\psline(1,1.2)(2,3.4)(3.2,1.2)(1,1.2)

\psline(2,3.4)(2,2.6)(1.6,1.6)
\psline(1.6,1.6)(1.4,1.2)
\psline(1.6,1.6)(2,1.2)

\psline(1.8,2)(2.4,1.8)

\psline(2.4,1.8)(2.2,1.2)
\psline(2.4,1.8)(2.6,1.2)
\rput(1.4,1.6){{\tiny $u'$}}

\rput(2.2,1.7){{\tiny $u''$}}

\psline[linestyle=dashed,linewidth=.4pt](2,2.6)(3.4,2.6)

\psline{->}(3.5,2.6)(5,2.6)
\rput(4.2,2.8){{$c$}}

\psline[linestyle=dashed,linewidth=.4pt](2.4,1.8)(3.6,1.8)

\psline(3.6,1.8)(4.8,1.8)
\rput(4.2,2){$e''$}

\psline[linestyle=dashed,linewidth=.4pt](1.6,1.6)(3.2,1)

\psline(3.2,1)(4.4,0.6)
\rput(4,1){$e'$}
\end{pspicture}

\begin{pspicture}(0,-1)(3,3)
\rput(1,1){\rnode{t}{$t$}}
\rput(0,0){\rnode{u}{$t'$}}
\rput(2,0){\rnode{v}{$t''$}}
\rput(1,-1){\rnode{w}{$\bullet$}}

\psset{nodesep=2pt}
{\scriptsize 
\ncline[arrows=->]{t}{u}\nbput{$e'$}\ncline[arrows=->]{t}{v}\naput{$e''$}

\ncline[arrows=->]{u}{w}\nbput{$e''$}\ncline[arrows=->]{v}{w}\naput{$e'$}
}

\rput(1,2.4){\rnode{A}{$t'\xleftarrow[u']{e'}t\xrightarrow[u'']{e''}t''$}}

\ncline[nodesep=4pt,linewidth=2pt,arrows=->]{A}{t}
\end{pspicture}
\hfill\begin{pspicture}(0,-1)(3,3)
\rput(1,1){\rnode{t}{$t$}}
\rput(0,0){\rnode{u}{$t'$}}
\rput(2,0){\rnode{v}{$t''$}}
\rput(1,-1){\rnode{w}{$\bullet$}}

\psset{nodesep=2pt}
{\scriptsize
\ncline[arrows=->]{u}{t}\naput{$e'$}\ncline[arrows=->]{v}{t}\nbput{$e''$}

\ncline[arrows=->]{w}{u}\naput{$e''$}\ncline[arrows=->]{w}{v}\nbput{$e'$}
}


\rput(1,2.4){\rnode{A}{$t'\xrightarrow[u']{e'}t\xleftarrow[u'']{e''}t''$}}

\ncline[nodesep=4pt,linewidth=2pt,arrows=->]{A}{t}
\end{pspicture}
\hfill\begin{pspicture}(0,-1)(3,3)
\rput(1,1){\rnode{t}{$t$}}
\rput(0,0){\rnode{u}{$t'$}}
\rput(2,0){\rnode{v}{$t''$}}
\rput(1,-1){\rnode{w}{$\bullet$}}

\psset{nodesep=2pt}
{\scriptsize 
\ncline[arrows=->]{u}{t}\naput{$e'$}\ncline[arrows=->]{t}{v}\naput{$e''$}

\ncline[arrows=->]{u}{w}\nbput{$e''$}\ncline[arrows=->]{w}{v}\nbput{$e'$}
}

\rput(1,2.4){\rnode{A}{$t'\xrightarrow[u']{e'}t\xrightarrow[u'']{e''}t''$}}

\ncline[nodesep=4pt,linewidth=2pt,arrows=->]{A}{t}
\end{pspicture}
\caption{Paths in $\graph_{\GTrs{R}}$ between two distinct vertices, labelled by $e'e''$ and $e''e'$}
\label{reecincompdessin}
\end{figure}

But there exists a position $v$ at which the term $t$ is incident to a rewriting labelled by $c$. Due to the precedent point, the position $v$ must be comparable to positions $u'$ and $u''$. Since $u'$ and $u''$ are not comparable, we deduce that $v\sqsubseteq u'$ and $v\sqsubseteq u''$.
\end{proof}

\begin{proof}[Proof of Theorem \ref{theoprincipal}]

Let $\GTrs{R}=(F,\Sigma,R,i)$ be a ground term rewriting system such that $\arbre{G}{p_0}$ is isomorphic to $\graph_{\GTrs{R}}$. We have to show that $$\dec:=\{(K,V_K\cap V_{\GRrondn})\mid K \text{ connected component of } \graph_{\GTrs{R}}-\GRrondn, n\geqslant 0\}$$ has finite index. Let $\delta:=\max\{\lvert\taille{d}-\taille{g}\rvert\mid g\xrightarrow{e}d\in R\}$ and $M:=\max\{\taille{g},\taille{d}\mid g\xrightarrow{e}d\in R\}$. We are going to show that for each connected component $K$ in $\dec$, there exists a position $u_K$ and a context $C_K$ such that  \begin{itemize}

\item for every term $t\in V_K$, $u_K\in \pos(t)$ and $C_K$ is the context of $t$ at the position $u_K$ ($t=C_K[\ssterme{t}{u_K}]$)

\item for every term $t\in  \Fr_{\graph_{\GTrs{R}}}(K)$, $\taille{\ssterme{t}{u_K}}<M+\delta$.
\end{itemize}Then the finite subset of the (finite) set of terms whose size is at most $M+\delta$, obtained from $\Fr_{\graph_{\GTrs{R}}}(K)$ by removing the context $C_K$, is 
characteristic of the isomorphy type of $(K,\Fr_{\graph_{\GTrs{R}}}(K))$. Indeed, 
for $K\in \dec$, let $\widetilde{K}:=\{s\mid C_K[s]\in \Fr_{\graph_{\GTrs{R}}}(K)\}$. If $\widetilde{K}=\widetilde{K'}$, then $(K,\Fr_{\graph_{\GTrs{R}}}(K))$ and $(K',\Fr_{\graph_{\GTrs{R}}}(K'))$ are isomorphic \textit{via} $C_K[s]\mapsto C_{K'}[s]$.

\medskip
Let $K\in \dec$ and $n\geqslant 0$ such that $K$ is a connected component of $\graph_{\GTrs{R}} - \GRrondn$. Remark that for every $t\in \Fr_{\graph_{\GTrs{R}}}(K)$, $n\leqslant \taille{t}< n+\delta$. In particular, $\Fr_{\graph_{\GTrs{R}}}(K)$ is finite.
\\Let $m_K:=\min\{\taille{t}\mid t\in V_K\}$. Thus $n\leqslant m_K$. Consider $\distinguer{t}\in V_K$ such that $\taille{\distinguer{t}}=m_K$. Since $\distinguer{t}$ is not an isolated vertex in $K$, there exists a position at which $\distinguer{t}$ is incident to a rewriting in $K$. Let $\distinguer{u}$ be the smallest prefix of this position such that $\taille{\ssterme{\distinguer{t}}{\distinguer{u}}}\leqslant M$. The term $\distinguer{t}$ can be written $\distinguer{t}=\distinguer{C}[\ssterme{\distinguer{t}}{\distinguer{u}}]$, with $\distinguer{C}$ a context.

We are going to prove that each term $t\in V_K$ is defined at position $\distinguer{u}$ and the context of $t$ at $\distinguer{u}$ is $\distinguer{C}$. It is sufficient to prove the following claim.
\begin{affirmation}\label{contexte}Let $t\in V_K$. The position $\distinguer{u}$ is prefix of every position at which the term $t$ is incident to a rewriting in $K$.
\end{affirmation}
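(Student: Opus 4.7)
The plan is to prove Claim \ref{contexte} by induction on the distance in $K$ between $\distinguer{t}$ and $t$, strengthening the statement to the conjunction of (a) $\distinguer{u}\sqsubseteq u_t$, where $u_t$ denotes the smallest position from Lemma \ref{reecincomp} at which $t$ is incident to a rewriting in $\graph_{\GTrs{R}}$, and (b) $t = \distinguer{C}[\ssterme{t}{\distinguer{u}}]$. Assertion (a) is stronger than Claim \ref{contexte}: every position $q$ at which $t$ is incident to a rewriting in $K$ is a fortiori a rewriting position in $\graph_{\GTrs{R}}$, hence $u_t\sqsubseteq q$ by Lemma \ref{reecincomp}, so $\distinguer{u}\sqsubseteq q$ follows from (a); assertion (b) is the bookkeeping on the ambient context that keeps the induction alive.

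The base case $t=\distinguer{t}$ is immediate: (b) holds by the very definition of $\distinguer{C}$, and for (a), if $u^*$ denotes the position of $\distinguer{t}$ in $K$ used to define $\distinguer{u}$, then $u_{\distinguer{t}}\sqsubseteq u^*$ by Lemma \ref{reecincomp} and $\taille{\ssterme{\distinguer{t}}{u_{\distinguer{t}}}}\leqslant M$ since that subterm is a rule's left-hand side, so by minimality of $\distinguer{u}$ among the prefixes of $u^*$ with subterm of size at most $M$, $\distinguer{u}\sqsubseteq u_{\distinguer{t}}$. In the inductive step, pick a vertex $s\in V_K$ adjacent to $t$ in $K$ by an edge at some position $p_0$, with $s$ one step closer to $\distinguer{t}$. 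The induction hypothesis applied to $s$ gives $\distinguer{u}\sqsubseteq u_s\sqsubseteq p_0$ and $s = \distinguer{C}[\ssterme{s}{\distinguer{u}}]$; since the rewrite linking $s$ to $t$ occurs at $p_0\sqsupseteq\distinguer{u}$, it leaves the context outside $\distinguer{u}$ unchanged, so $t = \distinguer{C}[\ssterme{t}{\distinguer{u}}]$, which is (b) for $t$.

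The crucial point is (a) for $t$. By Lemma \ref{reecincomp}, $u_t\sqsubseteq p_0$, so $u_t$ and $\distinguer{u}$ are comparable as prefixes of the same position. Suppose for contradiction that $u_t\sqsubsetneq\distinguer{u}$: then $\distinguer{u}\neq\varepsilon$ and $u_t$ is a prefix of the parent of $\distinguer{u}$, so by minimality of $\distinguer{u}$, $\taille{\ssterme{\distinguer{t}}{u_t}}>M$. Writing $E$ for the subcontext of $\distinguer{C}$ at position $u_t$ (whose hole lies at the suffix of $\distinguer{u}$ past $u_t$), (b) applied to both $t$ and $\distinguer{t}$ gives $\ssterme{t}{u_t}=E[\ssterme{t}{\distinguer{u}}]$ and $\ssterme{\distinguer{t}}{u_t}=E[\ssterme{\distinguer{t}}{\distinguer{u}}]$. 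Since $\ssterme{t}{u_t}$ is a rule's left-hand side its size is at most $M$, so subtracting the common $\taille{E}$ from $\taille{\ssterme{t}{u_t}}\leqslant M<\taille{\ssterme{\distinguer{t}}{u_t}}$ yields $\taille{\ssterme{t}{\distinguer{u}}}<\taille{\ssterme{\distinguer{t}}{\distinguer{u}}}$, whence $\taille{t}=\taille{\distinguer{C}}+\taille{\ssterme{t}{\distinguer{u}}}<\taille{\distinguer{t}}=m_K$, contradicting $t\in V_K$. The main obstacle is precisely this sub-case where $u_t$ strictly precedes $\distinguer{u}$; the $M$-bound on rule sides, combined with the minimality condition defining $\distinguer{u}$, is exactly what forces any such $u_t$ to shrink $t$ below $m_K$.
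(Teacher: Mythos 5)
Your proof is correct and takes essentially the same route as the paper's: an induction along the connectivity of $K$ propagating the invariant that the context of $t$ at $\distinguer{u}$ is $\distinguer{C}$, combined with Lemma \ref{reecincomp} and the bound $M$ on rule sides to exclude rewriting positions strictly below $\distinguer{u}$. Your contradiction $\taille{t}<m_K$ is just the contrapositive of the paper's observation that $\taille{t}\geqslant m_K$ forces $\taille{\ssterme{t}{p}}>M$ for every $p\sqsubsetneq\distinguer{u}$.
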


Let $t\in \Fr_{\graph_{\GTrs{R}}}(K)$. Recall that $n\leqslant\taille{t}<n+\delta$. Since $\taille{\ssterme{t}{\distinguer{u}}}=\taille{t}-\taille{\distinguer{C}}$, we deduce
$\taille{\ssterme{t}{\distinguer{u}}}<n+\delta-\taille{\distinguer{C}}\leqslant m_K+\delta-\taille{\distinguer{C}}$. But we have $m_K-\taille{\distinguer{C}}=\taille{\ssterme{\distinguer{t}}{\distinguer{u}}}\leqslant M$.
It follows that $\taille{\ssterme{t}{\distinguer{u}}}< M+\delta$.

\end{proof}

\begin{proof}[Proof of Claim \ref{contexte}]
Suppose (as it is the case for the term $\distinguer{t}$) that there exists a position $u$ at which a term $t$ is incident to a rewriting in $K$ such that $\distinguer{u}\sqsubseteq u$
and the context of $t$ at $\distinguer{u}$ is $\distinguer{C}$. We are going to prove that if $v$ is a position at which $t$ is incident to a rewriting in $K$, then $\distinguer{u}\sqsubseteq v$. Since $K$ is connected, the claim will be proved.

First, remark that there does not exist a position $p$ smaller than $\distinguer{u}$ at which $t$ is incident to a rewriting (in $\graph_{\GTrs{R}}$): since $t\in V_K$ and $\distinguer{t}$ (which has minimal size in $V_K$) have the same context $\distinguer{C}$, we have $\taille{\ssterme{t}{\distinguer{u}}}\geqslant \taille{\ssterme{\distinguer{t}}{\distinguer{u}}}$ and thus $\taille{\ssterme{t}{p}}\geqslant \taille{\ssterme{\distinguer{t}}{p}}>M$. We deduce that if there exists a position $p$ at which $t$ is incident to a rewriting and such that $p$ and $u$ are comparable, then $\distinguer{u}\sqsubseteq p$.

Then, consider the smallest position $u_{t}$ at which the term $t$ is incident to a rewriting (Lemma \ref{reecincomp}). 
By the previous point, we have $\distinguer{u}\sqsubseteq u_{t} \sqsubseteq u$. Thus  $\distinguer{u}\sqsubseteq v$.
\end{proof}

\section{Conclusion}
We have shown that a RTL graph is word-automatic and thus its first-order theory is decidable. We have also shown that such a graph does not have a decidable $\FOAccs$ theory. Furthermore, we have shown that the concurrent unfolding of a concurrent automaton with the reachability relation is a RTL graph and therefore its $\FOAccs$ theory is decidable. Lastly, we have shown that the class of concurrent unfoldings of finite concurrent automata is not included in the class of GTR graphs since the infinite grid tree is not a GTR graph.  

Summing up, we have extended the first level of the pushdown hierarchy that consists of suffix rewriting graphs of recognizable word rewriting systems, to RTL graphs. Graphs at the first level of the pushdown hierarchy are the monadic interpretations of regular trees, that are concurrent unfoldings of finite concurrent automata for a trivial dependence relation. A RTL graph is $\FO[\REC]$ interpretation of the Cayley graph of the underlying trace monoid (Remark \ref{forec}), that is the concurrent unfolding of a finite concurrent automaton. But we do not know whether reciprocally an $\FO[\REC]$ interpretation of a concurrent unfolding of a finite concurrent automaton is a RTL graph. We do not either know whether the concurrent unfolding transformation preserves $\FOAccs$ decidability. Another interesting problem would be to extend the second level of the pushdown hierarchy.

\newpage

\nocite{*}
\bibliographystyle{eptcs}
\bibliography{sample}
\end{document}